\newcommand{\eat}[1]{}
\newcommand{\opt}{{\text{\sc opt}}}
\newcommand{\pr}{{\mathbb{P}}}
\newtheorem{theorem}{Theorem}
\newtheorem{lemma}{Lemma}
\title{Online Load Balancing on Unrelated Machines with Startup Costs\thanks{Part
of this work was done while the first author was visiting and the
second author was an intern at Microsoft Research, Redmond, WA 98052.}}
\author{Yossi Azar\thanks{
Blatavnik School of Computer Science,
Tel-Aviv University, Tel-Aviv 69978, Israel. Email: {\tt azar@tau.ac.il}.}
\and Debmalya Panigrahi\thanks{
Computer Science and Artificial Intelligence Laboratory,
Massachusetts Institute of Technology,
Cambridge, MA 02139, USA. Email: {\tt debmalya@mit.edu}.}}
\date{}
\begin{document}

\maketitle

\thispagestyle{empty}

\begin{abstract}
Motivated by applications in energy-efficient scheduling in data 
centers, Khuller, Li, and Saha introduced the {\em machine activation} problem 
as a generalization of the classical optimization problems of minimum set cover 
and minimum makespan scheduling on parallel machines. 
In this problem, a set of $n$ jobs
have to be distributed among a set of $m$ (unrelated) machines, given the
processing time of each job on each machine. Additionally, each machine incurs
a startup cost if at least one job is assigned to it. 
The goal is to produce a schedule of minimum total
startup cost subject to a constraint $\bf L$ on its makespan.
While Khuller~{\em et al} considered the offline version of this problem, 
a typical scenario in scheduling is one where jobs arrive online and 
have to be assigned to a machine immediately on arrival. 
We give an $(O(\log (mn)\log m), O(\log m))$-competitive randomized online algorithm
for this problem, i.e. the schedule produced by our algorithm has 
a makespan of $O({\bf L} \log m)$ with high probability, and a total expected 
startup cost of 
$O(\log (mn)\log m)$ times that of an optimal offline schedule with makespan 
$\bf L$. Our algorithm is almost optimal since it follows from previous results
that the two approximation factors cannot be improved to $o(\log m\log n)$ 
(under standard complexity assumptions) and $o(\log m)$ respectively. 

Our algorithms use the online primal dual framework introduced by 
Alon~{\em et al} for the online set cover problem, and subsequently developed 
further by Buchbinder, Naor and co-authors in various papers. To the best of 
our knowledge, all previous applications of this framework have been to 
linear programs (LPs) with either packing or covering constraints. One 
novelty of our application is that we use this framework for a mixed LP that 
has both covering and packing constraints. We combine the packing constraint with
the objective function to design a potential function on the machines that is exponential
in the current load of the machine and linear in the cost of the machine.
Then, we create a dynamic order of machines based on this potential function and
assign larger fractions of the job to machines that appear earlier in this 
order. This allocation is somewhat unusual in that
the increase in load on a machine is inverse in the 
value of this potential function itself, i.e. inverse exponential in the
current load on the machine. Finally, we show that we can round
this fractional solution online using a randomized algorithm. 
We hope that the algorithmic 
techniques developed in this paper to simultaneously handle packing and
covering constraints will be useful for solving other online optimization 
problems as well. 

\end{abstract}

\clearpage

\setcounter{page}{1}

\section{Introduction}

In recent times, the emergence and widespread use of large-scale data centers with 
massive power requirements has elevated the problem of energy-efficient scheduling 
to one of paramount importance (see e.g.~\cite{BirmanCR09}). A natural strategy 
for achieving energy savings is that of {\em partial shutdown}, i.e. only 
a subset of machines/processors are active at any point of time. This immediately leads to 
the following scheduling question: {\em which set of machines should be activated 
to serve a given set of jobs?} Note that such a schedule must address twin
objectives:
\begin{itemize}
	\item The total cost (e.g. in terms of energy consumption) of all 
	machines used in the schedule must be small (thereby achieving energy
	efficiency).
	\item The sum of processing times of all jobs assigned to any machine
	must be small (thereby satisfying throughput requirements).
\end{itemize}
Motivated by this application, Khuller, Li, and Saha~\cite{KhullerLS10} 
introduced the {\em machine activation} problem that involves scheduling jobs 
to machines so as to minimize the cost of the machines used in the schedule, while
ensuring that the ``load'' on any machine is small.
Observe that treating each of these objectives individually leads to classical
problems in combinatorial optimization, namely {\em minimum set cover} and
{\em minimum makespan scheduling} on parallel machines, 
that have been extensively studied over
the last thirty years. The novelty of the algorithm proposed in 
\cite{KhullerLS10} for the machine activation problem lies in being able to
handle both objectives simultaneously. 

More formally, let $M$ be a set of $m$ machines and $J$ be a set of $n$ jobs, 
where the {\em processing time} of job $j$ on machine $i$ is $p_{ij} > 0$.
Further, machine $i$ has {\em startup cost} $c_i$. A {\em schedule}
is defined as an assignment $S: J\rightarrow M$ of jobs to machines; we denote
the set of jobs assigned to machine $i$ in schedule $S$ by $J^{(S)}_i$. 
The set of {\em active machines} $M^{(S)}_A$ in schedule $S$ are the machines 
to which at least one job has been assigned, i.e. 
$M^{(S)}_A = \{i\in M: J^{(S)}_i \not= \emptyset\}$, and the cost of schedule $S$ 
is defined as $\sum_{i\in A^{(S)}} c_i$. The {\em load} $\ell^{(S)}_i$ on machine $i$
is the sum of processing times of all jobs assigned to machine $i$, 
i.e. $\ell^{(S)}_i = \sum_{j\in J^{(S)}_i} p_{ij}$, and the {\em makespan}
$\ell^{(S)}_{\max}$ is the maximum load on a machine, i.e.
$\ell^{(S)}_{\max} = \max_{i\in M} \ell^{(S)}_i$. (Often, we will drop the 
superscript $(S)$ in the above notation if the schedule is clear from the context.) 
The objective of the machine activation problem is to obtain a schedule of minimum 
cost, subject to the constraint that its makespan is at most some given value $\bf L$.

In real-life scheduling tasks, the set of jobs is often not known in advance.
This has motivated extensive algorithmic research in online scheduling problems, 
where the  set of machines are available offline but the jobs appear online 
and have to be scheduled to a machine when they arrive. A natural and 
important question left open in \cite{KhullerLS10} was to obtain an algorithm for 
the machine activation problem in the online model. Here, the set of machines,
their individual startup costs, and the budget on the total startup cost of the 
machines activated by the schedule are known offline, but the jobs arrive online.
The processing time of a job on each machine is also revealed on arrival of the 
job. The goal is to assign the arriving job to a machine such that the cost
of the resulting schedule is minimized subject to the constraint that its makespan
is at most $\bf L$. We call this the {\em online machine activation problem}.

\noindent
\paragraph{Our Contributions.} 
Our main contribution is a randomized online algorithm 
for the machine activation problem with a bicriteria competitive ratio of 
$(O(\log (mn)\log m), O(\log m))$: suppose an offline 
optimal schedule for an instance of our problem has cost $B$ and makespan at most $\bf L$; 
then our online algorithm produces a schedule of expected
cost $O(B\log (mn)\log m)$ and makespan $O({\bf L}\log m)$
with high probability. 
\begin{theorem}
\label{thm:main}
There is a randomized online algorithm for the machine activation problem that
has a bicriteria competitive ratio of $(O(\log (mn)\log m), O(\log m))$.
\end{theorem}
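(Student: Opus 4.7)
The plan is to attack the theorem through the online primal-dual framework applied to a \emph{mixed} covering-packing LP. First I would write the natural LP relaxation of machine activation with variables $y_i \in [0,1]$ for the fractional activation of machine $i$ and $x_{ij} \in [0,1]$ for the fractional assignment of job $j$ to machine $i$: covering constraints $\sum_i x_{ij} \geq 1$ for every arriving job, packing (makespan) constraints $\sum_j p_{ij} x_{ij} \leq \mathbf{L}$ on every machine, coupling constraints $x_{ij} \leq y_i$, and objective $\min \sum_i c_i y_i$, to be compared against the optimum cost $B$. Dual variables $\alpha_j$ (for covering) and $\beta_i$ (for packing) will drive the online updates.

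Next I would design a fractional online algorithm. When job $j$ arrives, $\alpha_j$ is raised continuously, which triggers increases in $x_{ij}$ and $y_i$. Following the blueprint sketched in the abstract, I would maintain a per-machine potential
\[
\Phi_i \;\sim\; c_i\cdot \exp\!\bigl(\ell_i / \mathbf{L}\bigr),
\]
where $\ell_i = \sum_{j'} p_{ij'} x_{ij'}$ is the current fractional load. Machines are dynamically ranked (roughly) by $\Phi_i / p_{ij}$, and the rate at which the algorithm pushes job $j$ onto machine $i$ is proportional to $1/\Phi_i$, i.e.\ inverse-exponential in the current load. Two invariants should drop out: $\sum_i c_i y_i$ grows by at most $O(\log(mn))$ times its contribution to the dual objective, and the exponential shape of $\Phi_i$ forces $\ell_i = O(\mathbf{L})$ fractionally. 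Dual feasibility on each machine, roughly $\sum_j p_{ij} \alpha_j \le O(c_i \log(mn))$, is maintained because raising $\alpha_j$ further would saturate $\Phi_i$ and shut off additional load on that machine.

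The last step is online randomized rounding. I would sample, for each machine, $O(\log(mn))$ independent thresholds $\theta_i^{(k)} \in [0,1]$, activate $i$ the first time $y_i$ exceeds the smallest of these thresholds, and assign each arriving job to one already-activated machine chosen with probability proportional to $x_{ij}$ (with a fallback activation if none is yet active). Standard concentration then gives that the activation cost is within an $O(\log(mn))$ factor of the fractional cost in expectation, and that, combined with the $O(\log m)$ slack inherent to the fractional makespan guarantee, Chernoff bounds yield integral makespan $O(\mathbf{L}\log m)$ with high probability.

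The main obstacle I anticipate is the mixed-LP analysis itself. Classical online primal-dual arguments handle purely covering or purely packing programs with disjoint tools; here the packing constraint lives inside the potential while the covering constraint drives dual growth, so a \emph{single} update rule must simultaneously (i) keep the cost growth bounded by $O(\log(mn))$ times the dual, and (ii) keep $\ell_i$ from blowing past $O(\mathbf{L})$ fractionally. Calibrating the inverse-exponential load-raise rule so that both inequalities are tight at the same time, and showing that the induced $\beta_i$ stays dual-feasible, is where the delicate core of the argument will live; the LP formulation and the rounding step are, by comparison, technical but standard.
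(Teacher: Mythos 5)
Your proposal restates the high-level strategy (mixed LP, cost-times-exponential-load potential, inverse-exponential assignment rates, threshold rounding), but it stops exactly where the proof has to start, and the two places you defer are the two places where the argument can actually fail. First, on the fractional side you never specify the update rule or the charging argument, and you explicitly concede that calibrating the rule so that cost growth and load growth are controlled simultaneously is ``where the delicate core will live.'' The paper's proof of this step is not a routine dual-fitting: it orders machines by a virtual cost $c_i p_{ij}$ (or $c_i a^{\ell_i-1}p_{ij}$ once $x_i=1$), updates $x_i$ multiplicatively by $(1+1/(c_i n))$ over a prefix of that order, converts the increase into ``effective capacity'' for $y_{ij}$, and then bounds the number of steps by splitting them into three categories according to where a fixed optimum places the job --- a set-cover-style charge when the optimum's machine is in the prefix, a per-job capacity argument when it is partially active outside the prefix, and an Aspnes-et-al-style recursive bound when it is fully active, yielding $\phi \le O(m\log m) + (2/3)\phi$. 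None of this is implied by the invariants you assert; in particular your claim that the exponential potential forces fractional load $O(\mathbf{L})$ is unsupported (an $\Omega(\log m)$ loss in makespan is unavoidable even fractionally, and the paper only proves $O(\mathbf{L}\log m)$), and your bookkeeping gives cost $O(\log(mn))$ fractionally times $O(\log(mn))$ from rounding, i.e.\ $O(\log^2(mn))$, which is weaker than the claimed $O(\log(mn)\log m)$ when $n \gg m$.

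Second, the rounding step as you describe it --- assign each job among the \emph{already-activated} machines with probability proportional to the fractional assignment --- runs head-on into the positive-correlation problem that the paper identifies as the main difficulty of this phase. Conditioned on machine $i$ being active, the probability that job $j$ lands on it is roughly $y_{ij}/x_i(j)$, not $y_{ij}$, so the conditional expected load is $\sum_j y_{ij}p_{ij}/x_i(j)$, which is not controlled by the fractional load, and ``standard concentration'' does not apply because the assignment events on a fixed machine are correlated through the activation event. The paper resolves this by showing that the normalized probabilities are, with high probability, dominated by the \emph{unconditional} quantities $z_{ij}=y_{ij}/(2x_i(j))$, and then proving the key bound $\sum_{j:\,x_i(j)<1} y_{ij}p_{ij}/(2x_i(j)) \le 3\ln m$ via the relation $y_{ij}\le 6(x_i(j)-x_i(j'))/p_{ij}$ and a telescoping integral $\int_{1/m}^{1} dw/w$; only then does Chernoff give makespan $O(\log m)$. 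Your sketch contains no substitute for this argument, so the makespan guarantee is a genuine gap, not a technicality.
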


In the minimum set cover problem, we are given a collection of subsets defined on 
a universe of elements. The goal is to select a minimum cost sub-collection
such that every element of the universe is in at least
one selected subset. If the elements appear online
and the current selection of subsets must cover every element that has
appeared thus far, a lower bound of $\Omega(\log m\log n)$ is 
known~\cite{Korman05}
for $m$ sets and $n$ elements, under standard complexity assumptions. 
Since the set cover problem is a special case of the machine activation problem
where the limit $\bf L$ on the makespan of the schedule is $\infty$, 
the competitive ratio in the cost of the schedule for any online algorithm 
for the machine activation problem must be $\Omega(\log m \log n)$.

On the other hand, the (minimum makespan) scheduling problem 
for unrelated parallel machines is 
defined as that of distributing $n$ jobs among $m$ machines so as to minimize the 
makespan of the schedule (machines do not have cost). It can be shown using standard 
techniques that an online algorithm that produces a schedule of makespan at most 
$\alpha{\bf L}$ for the online machine activation 
problem can be used to obtain an $O(\alpha)$-competitive algorithm for the online
scheduling problem on unrelated parallel machines. 
It is well-known~\cite{AzarNR95} that the competitive ratio of any algorithm 
for the latter problem is $\Omega(\log m)$; therefore, this
lower bound also holds for the online machine activation problem. 
\begin{theorem}
\label{thm:lowerbound}
No algorithm for the online machine activation problem can have a competitive
ratio of $o(\log m)$ in the makespan. Further, under standard complexity 
assumptions, no algorithm for the online machine activation problem can have 
an approximation factor of $o(\log m \log n)$ in the cost of the schedule.
\end{theorem}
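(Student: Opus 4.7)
The plan is to prove each of the two bounds by a separate reduction from a known online lower bound for a special case of the machine activation problem, following the hints already sketched in the paragraphs preceding the theorem. For the makespan bound, I would take any instance of online minimum-makespan scheduling on $m$ unrelated machines with $n$ jobs and lift it to an online machine activation instance by assigning uniform startup costs (say $c_i = 1$ for every machine), keeping the processing times $p_{ij}$ unchanged, and setting the makespan budget $\mathbf{L}$ equal to the optimum scheduling makespan (using a standard doubling guess if this value is not known in advance, at the cost of an $O(1)$ factor in $\alpha$). Any online algorithm with bicriteria competitive ratio $(\beta,\alpha)$ applied to the lifted instance then produces, for the original scheduling problem, a schedule of makespan at most $\alpha\mathbf{L}$, i.e.\ an $O(\alpha)$-competitive online scheduler. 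The Azar--Naor--Rom $\Omega(\log m)$ lower bound for the latter forces $\alpha = \Omega(\log m)$.

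For the cost bound I would reduce from online minimum set cover on a universe of $n$ elements and a family of $m$ weighted sets. Create one machine per set with startup cost equal to the corresponding set cost, one job per element revealed in the online order of the set cover instance, and let $p_{ij}=1$ when element $j$ belongs to set $S_i$ and $p_{ij}=\Phi$ otherwise, where $\Phi$ is a large parameter to be chosen below. With $\mathbf{L}=n$, the optimum machine activation cost coincides with the optimum set cover cost since the offline optimum assigns every element only to a set that contains it, using load at most $n$ on each machine. Moreover, if $\Phi > \alpha n$, no $(\beta,\alpha)$-competitive online algorithm can assign a job $j$ to a machine $i$ with $j\notin S_i$, because that single assignment alone would already exceed the permitted makespan $\alpha\mathbf{L}$. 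Consequently the set of active machines must cover every element revealed so far, and the algorithm's total startup cost is at most $\beta\cdot\opt_{\text{SC}}$. Korman's $\Omega(\log m\log n)$ lower bound for online set cover under standard complexity assumptions then implies $\beta = \Omega(\log m\log n)$.

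The one point that requires any real care is precisely the bicriteria nature of the statement: one has to rule out the degenerate strategy of dumping all jobs onto a single cheap machine at the price of an enormous makespan, which would otherwise yield a trivially small cost ratio while ignoring the set-cover structure entirely. Choosing $\Phi$ sufficiently large relative to the algorithm's makespan ratio $\alpha$ is exactly what prevents this in the reduction above, and because the lower bound is asymptotic and $\Phi$ depends only on $\alpha$ and $n$, this is a routine technicality rather than a genuine obstacle. Neither reduction involves any nontrivial analysis of its own beyond these observations.
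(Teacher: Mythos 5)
Your proposal is correct and follows essentially the same route as the paper, which proves the theorem exactly by these two reductions: the makespan bound via simulating online unrelated-machines scheduling (uniform costs, guess-and-double on $\mathbf{L}$) against the $\Omega(\log m)$ lower bound of Azar, Naor and Rom, and the cost bound via online set cover against Korman's $\Omega(\log m\log n)$ hardness. Your only deviation is cosmetic: where the paper treats set cover as the special case ``$\mathbf{L}=\infty$,'' you use a finite penalty $\Phi>\alpha n$ with $\mathbf{L}=n$, which is a slightly more careful way of ensuring the bicriteria makespan slack cannot be abused, but it is the same reduction.
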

\noindent
\paragraph{Our Techniques.}
Our algorithm draws inspiration from the techniques used to solve the online 
versions of the set cover problem and the scheduling problem for unrelated 
parallel machines. So, let us first summarize the key ideas involved in these
two algorithms. For the latter problem, Aspnes {\em et al}~\cite{AspnesAFPW97} 
gave an elegant solution based on the following exponential potential function: 
if the current load on machine $i$ is
$\ell_i$, then its potential is  $a^{\ell_i}$ for some constant $a$. 
The algorithm assigns the arriving job to a machine that suffers the minimum
increase of potential, i.e. to machine 
$i = \arg\min_{i\in M} (a^{\ell_i+p_{ij}} - a^{\ell_i})$. Observe that if all
processing times are scaled down sufficiently and $a$ is small enough, then
$a^{\ell_i + p_{ij}} - a^{\ell_i} \simeq a^{\ell_i} (a-1) p_{ij}$, i.e. the 
increase in potential function is linear in the processing time but exponential
in the current load on the machine. Therefore, the algorithm favors 
lightly loaded machines in preference to those offering low processing times.
It was shown in \cite{AspnesAFPW97} that this strong ``bias'' for lightly loaded 
machines ensures that the sum of the potentials of all machines is at most 
$O(m)$ times that in an optimal offline schedule, thereby leading to a competitive 
ratio of $O(\log m)$ on the makespan of the schedule. 

For the online set cover problem, Alon {\em et al}~\cite{AlonAABN09} introduced 
a two-phase online primal dual framework that works as follows. In the first 
phase, the goal is to obtain a feasible fractional solution to the online 
instance of the problem. In each step of this phase, in response to a new 
constraint (i.e. a new element in the online set cover problem) 
that arrives online, the fractional solution is updated to preserve 
feasibility while ensuring that the cost incurred can be accounted for by a 
suitably updated dual solution.\footnote{The dual was not used explicitly in Alon 
{\em et al}'s original analysis. In fact, we will also not use the dual explicitly
even though our algorithm can also be analyzed via the dual.}
In the second phase, the fractional solution is rounded {\em online} to obtain 
an integer solution. It is important to note that while the two phases are presented
sequentially for clarity, the algorithm needs to operate both phases (the fractional
updates followed by the rounding) in response
to response to the arrival of a new constraint. 
These two phases aim to notionally distinguish between 
the information-theoretic aspect of the online problem and the computational aspect
of rounding a fractional solution. In fact, Alon~{\em et al} showed 
for $m$ sets and $n$ elements, this two-phase framework can be used to 
design an algorithm that has a competitive ratio of $O(\log m\log n)$, where the 
$O(\log m)$ factor arises in the first phase due to information-theoretic
limitations of the online algorithm, and the $O(\log n)$ factor arises in the
second phase due to computational limitations encapsulated by the integrality
gap of the linear program (LP) for set cover.

\begin{figure}[!htb]
	\centering
	\small
	Minimize $\quad\sum_{i\in M} c_i x_i\quad$ subject to
	\begin{eqnarray}
		\label{eqn:packing}		\sum_{j\in J} p_{ij} y_{ij} & \leq &  x_i {\bf L} \quad \forall i\in M \\
		\label{eqn:fraction}		y_{ij} & \leq & x_i\quad \forall~i\in M,~j\in J \\
		\label{eqn:covering}		\sum_{i\in M} y_{ij} & \geq & 1\quad \forall~j\in J \\
		\label{eqn:range1}		x_i & \in & \{0, 1\} \quad \forall i\in M \\
		\label{eqn:range2}		y_{ij} & \in & \{0, 1\}\quad \forall i\in M,~j\in J 
	\end{eqnarray}
	\caption{\small The integer scheduling linear program (or ISLP). In the fractional scheduling linear
		program (or FSLP), Eqns.~\ref{eqn:range1} and \ref{eqn:range2} are relaxed to $0 \leq x_i \leq 1$
		for all machines $i\in M$, and $0\leq y_{ij}\leq 1$ for all machines $i\in M$ and jobs $j\in J$, 
		respectively.}
	\label{fig:lp}
\end{figure}

This two-phase primal dual framework has since been extensively  
used for various online problems (see \cite{BuchbinderN09b} for a survey);
our algorithm also uses this framework. However, to the best of our knowledge,
whereas all previous applications of the framework have been to LPs that have 
exclusively covering or packing constraints, we apply the framework to a mixed LP. 
Consider the integer LP formulation of our problem 
given in Fig.~\ref{fig:lp} (we call this the {\em integer scheduling LP} or ISLP). 
The variable $x_i$ is 1 iff machine $i$ is active, 
and $y_{ij}$ is 1 iff job $j$ is assigned to machine $i$. In the fractional
relaxation (which we call the {\em fractional scheduling LP} or FSLP), 
these variables are constrained to be in the range $[0, 1]$ 
instead of Eqns.~\ref{eqn:range1} and \ref{eqn:range2}. 
Note that we have both covering 
(Eqn.~\ref{eqn:covering}) and packing (Eqn.~\ref{eqn:packing}) constraints
in the ISLP/FSLP.

We interpret the online set cover algorithm as one that 
maintains a bound on a potential function that is linear in the cost of the 
active machines, whereas the online scheduling algorithm translates its 
objective of minimizing makespan into maintaining a bound on the value of 
a potential function that is exponential in the load on each machine. We 
design a potential function that combines both objectives: it is linear in
the startup cost and exponential in the load on a machine. Our fractional
algorithm preferentially assigns (fractions of) jobs to machines in a way that 
leads to a small increase in this potential. It should be noted that even 
if we use this potential function, we cannot afford to simply assign
the entire job  to the machine that would suffer the minimum increase in 
potential---such a 
greedy strategy can be easily shown to fail even for the special case of the
online set cover problem. Instead, our algorithm creates a dynamic list of 
machines in increasing order of its change in potential if the current job 
were assigned to it, and then assigns larger fractions of the job to 
machines that appear earlier in the order. 
This allocation is also somewhat unusual in that
the increase in load on a machine is inverse in the 
value of this potential function itself, i.e. inverse exponential in the
current load on the machine. For some of the machines in the
order, this might also involve increasing the fraction to which the machine
is active, i.e. increasing the value of $x_i$, if the assignment violates 
Eqn.~\ref{eqn:packing} or \ref{eqn:fraction}. The analysis of the fractional
algorithm involves proving a bound on the value of the cumulative
potential function over all the machines. Depending on the behavior of a
fixed optimal offline solution (which is unknown to the algorithm and is 
used only for analysis), we classify jobs into three different categories, and
prove a bound on the total increase in potential due to jobs in each individual
category using three different techniques:
\begin{itemize}
\item The increase of potential for jobs in category 1 is charged globally to 
the offline optimal solution using a primal dual argument (without introducing
the dual solution explicitly) similar in spirit to that used by Alon {\em et al} 
in~\cite{AlonAABN09} for the online set cover problem.
\item We give a bound on the increase in potential for each individual job in 
category 2 by showing that the ratio of increase in potential to the fraction
of job assigned is bounded.
\item We give a global bound on the increase of potential for jobs in category
3 by using a recursive argument which is similar in spirit 
to the one used in the online scheduling
algorithm for unrelated machines by Aspnes {\em et al}~\cite{AspnesAFPW97}.
\end{itemize}
The novelty of our analysis lies in being able to seamlessly combine, and
non-trivially extend, the disparate 
techniques from \cite{AlonAABN09} and \cite{AspnesAFPW97}.
Ultimately, we show that the potential of the fractional schedule 
produced by our algorithm is $O(m\log m)$, where the cost and makespan
of the offline optimal solution are respectively $\Omega(m)$ and 1 by 
a suitable initial scaling. We hope that the algorithmic 
techniques developed in this paper to simultaneously handle packing and
covering constraints will be useful for solving other online optimization 
problems that can be expressed as mixed LPs as well. 

In the second phase of the algorithm, we use an online randomized rounding scheme 
to obtain an integer solution. To ensure that the expected cost of 
the integer schedule is bounded by that of the fractional schedule, each machine $i$ 
is activated with probability proportional to $x_i$. This is implemented online using 
standard techniques. The more challenging aspect of the rounding is the actual 
scheduling of jobs to active machines. The natural approach would be to assign 
job $j$ to machine $i$ with probability $y_{ij}$. 
Translated to conditional probabilities, this implies
that job $j$ should be assigned to machine $i$ with probability $z_{ij} = y_{ij}/x_{ij}$
if machine $i$ is active, where $x_{ij}$ is the value of $x_i$ at the end of the
update to the fractional solution for job $j$. This immediately implies that the expected
load on a machine in the integer schedule is at most that in the fractional schedule. 
However, our goal is to obtain a bound
on the {\bf makespan} of the integer solution; in fact, since the events of 
jobs being assigned to a fixed machine are positively
correlated, a bound on the expected load does not immediately
yield a concentration bound on the load. Instead, we show that even if job $j$ 
were to be assigned to machine $i$ {\em unconditionally} with probability $z_{ij}$,
the expected load on machine $i$ given by $\sum_{j\in J} z_{ij}$ would be small. 
This overcomes the problem of positive correlation mentioned above since the events are 
no longer conditioned on machine $i$ being active. We now derive
concentration bounds on the load on a machine, which translates to a bound on the 
makespan of the integer schedule thereby proving Theorem~\ref{thm:main}.

\noindent
\paragraph{Previous Work.} 
Many variants of the machine scheduling (or {\em load balancing}) problem have been 
extensively studied in the literature. Perhaps the most celebrated result in 
scheduling theory is a 2-approximation for the offline minimum makespan scheduling problem
for unrelated machines due to Lenstra, Shmoys and Tardos~\cite{LenstraST90},
which was later simplified by Shmoys and Tardos~\cite{ShmoysT93}.
Rather surprisingly, this algorithm continues to offer the best competitive ratio for
this problem (and even for several natural special cases such as the restricted
assignment problem) even after more than two decades
of research. 
In the online setting, Graham~\cite{Graham66, Graham69} showed that the natural greedy
heuristic achieves
a competitive ratio of $2-1/m$ for $m$ identical machines. The competitive ratio of 
this problem has been subsequently improved in a series of results 
(see e.g.~\cite{BartalFKV95} and subsequent improvements). 
For the more general restricted assignment problem
where the processing time of each job $j$ on any machine is either some value $p_j$ 
or $\infty$, an online algorithm having competitive ratio $O(\log m)$ was designed
by Azar, Naor and Rom~\cite{AzarNR95}. This algorithm was later generalized to 
the unrelated machines scenario by Aspnes {\em et al}~\cite{AspnesAFPW97} 
with the same competitive ratio.
Various other models and objectives have been considered for the load balancing 
problem; for a comprehensive survey, see \cite{Azar96} and \cite{Sgall96}. 
In particular, the machine activation problem was introduced by 
Khuller~{\em et al} in \cite{KhullerLS10}, where they gave an 
$O(2(1+1/\epsilon)(1 + \ln (n/OPT)), 2+\epsilon)$-approximation algorithm for
any $\epsilon > 0$. Recently, this result was extended by Khuller and
Li~\cite{LiK11} to a more general set of cost functions.


\section{The Fractional Algorithm}
\label{sec:fractional}

\begin{figure}[!htb]
	\small
	\centering
	Minimize $\quad\sum_{i\in M} c_i x_i\quad$ subject to
	\begin{eqnarray}
		\label{eqn:packingnew}	\sum_{j\in J} p_{ij} y_{ij} & \leq & 6 x_i {\bf L} \\
		\label{eqn:fractionnew}		y_{ij} & \leq & 2 x_i\quad \forall~i\in M,~j\in J \\
		\label{eqn:coveringnew}		\sum_{i\in M} y_{ij} & \geq & 1\quad \forall~j\in J \\
		\label{eqn:range1new}		0\quad \leq & x_i & \leq \quad 1 \quad \forall i\in M \\
		\label{eqn:range2new}		0\quad \leq & y_{ij} & \leq \quad 1 \quad \forall i\in M,~j\in J 
	\end{eqnarray}
	\caption{\small The relaxed fractional scheduling linear program (or RFSLP). 
		Eqn~\ref{eqn:packingnew} is enforced only for partially active machines $i$.
		(Note that for inactive machines, Eqns.~\ref{eqn:packingnew} and
		\ref{eqn:fractionnew} are identical.)}
	\label{fig:lp-relaxed}
\end{figure}

In this section, we will describe the online updates to the fractional
solution to maintain feasibility for FSLP on receiving a new job $j$. 
This involves updating the values of 
$y_{ij}$ (denoting the fraction of job $j$ assigned to machine $i$)
so as to satisfy Eqn.~\ref{eqn:covering}, and corresponding updates 
to the values of $x_i$ if Eqns.~\ref{eqn:packing} or \ref{eqn:fraction}
is violated. 
In fact, we relax the constraints in FSLP in two ways.
Let machine $i$ be said to be {\em inactive}, {\em partially active} or
{\em fully active} depending on whether $x_i = 0$, $0 < x_i < 1$ or
$x_i = 1$ respectively. First, for technical reasons, 
we relax Eqns.~\ref{eqn:packing} and \ref{eqn:fraction} 
to Eqns.~\ref{eqn:packingnew} and \ref{eqn:fractionnew} respectively
(see Fig.~\ref{fig:lp-relaxed}). Further, we 
enforce Eqn.~\ref{eqn:packing} only if $x_i < 1$, i.e. if machine 
$i$ is not fully active. The load on a fully active machine will
be bounded separately in the analysis.
We call this the {\em relaxed fractional scheduling LP} or RFSLP.

Before describing these updates, let us set up some 
conventions that we will use throughout the paper.
We divide all processing times by $\bf L$ at the outset;
this allows us to assume that the makespan of the optimal solution is 1.
We also assume that we know the value $\alpha$ of the optimal 
offline (integer) solution, i.e. the minimum startup cost of an offline 
assignment of jobs to machines that has makespan at most 1. 
This is also without loss of generality because we can guess the 
value of the optimal solution, doubling our guess whenever the 
current algorithmic solution exceeds the cost bounds that we are 
going to prove (thereby implying that our current guess is too small). 
In the following discussion, it is sufficient to
know the value of $\alpha$ up to a multiplicative 
factor of 2, but for simplicity
of presentation, we will assume that we know it exactly.

Our algorithm uses the value of $n$, which is the total number of
jobs that arrive online. If this value is not known offline, each job 
estimates $n$ by assuming that it is the last job. We can show that 
using such estimates for $n$ incurs a small {\em additive} factor of 
$O(\log \log n)$ in the makespan, and an {\em additive} factor of 
$O(\log n\log (mn))$ in the cost of the schedule. 
However, for simplicity, the rest of the paper assumes that the value 
of $n$ is known offline.

We define the {\em virtual cost} of job $j$ on machine $i$ as 
\begin{equation*}
\eta_i(j) = \left\{ \begin{array}{ll}
c_i a^{\ell_i-1} p_{ij}, & \text{if machine $i$ is fully active, i.e. } x_i = 1 \\
c_i p_{ij}, & \text{otherwise}
\end{array}\right.
\end{equation*}
where $a$ is a constant that we will fix later. 
(Recall that $\ell_i$ represents the load on machine $i$, 
i.e. $\ell_i = \sum_{j\in J} p_{ij} y_{ij}$.)
Let $M(j)$ denote
an ordering of machines in non-decreasing order of virtual cost 
$\eta_i(j)$ for job $j$. 
Let $P(j)$ denote the maximal prefix of $M(j)$ such that 
$\sum_{i\in P(j)} x_i < 1$. (Note that $P(j)$ may be empty.)  
If $P(j)\not= M(j)$, then $k(j)$ 
denotes the first machine in $M(j)$ that is not in $P(j)$; 
$k(j)$ is undefined if $P(j) = M(j)$.

If $x_i$ is increased to $x_i + \Delta x_i$ for a partially
active machine $i$, then
we say that the {\em effective capacity} created by this increase 
for job $j$ is $\min(2 x_i, 6 \Delta x_i/p_{ij})$.
Note that the effective capacity created by an increase in $x_i$ 
is a feasible increase in the value of $y_{ij}$ independent of the 
current load on machine $i$.

\subsection{The Algorithm}
The algorithm has two phases---an offline pre-processing phase, and
an online phase that (fractionally) schedules the arriving jobs.
 
\paragraph{Pre-processing.} We multiply the startup cost of every machine
by $\alpha/m$, and discard machines with startup cost greater than $m$
(after the scaling) at the outset. Further, for every machine $i$
whose startup cost is at most 1, we increase its
cost to 1, and initialize $x_i$ to 1. For all other
machines with $1 < c_i \leq m$, we initialize $x_i$ to $1/m$.
At the end of the pre-processing phase, we have the following properties:
\begin{itemize}
	\item The cost of an optimal solution is between $m$ and $2m$.
	\item The cost of every machine is between 1 and $m$.
	\item Every machine whose cost is 1 is fully active; all other
	machines have $x_i = 1/m$.
\end{itemize}


\paragraph{Online Algorithm.}
Suppose job $j$ arrives online. 
We increase $y_{ij}$s using the following rules repeatedly until 
Eqn.~\ref{eqn:coveringnew} is satisfied.
\begin{itemize}
\item {\bf Type A: $k(j)$ is undefined (i.e. $P(j) = M(j)$) or 
$x_{k(j)} < 1$ (i.e. machine $k(j)$ is not fully active).} 
We increase $x_i$ to 
$\min(x_i(1+1/c_i n), 1)$ for each machine 
$i\in P(j)$ (and also for $i= k(j)$ if it is defined), and correspondingly
increase $y_{ij}$ by the effective capacity created in each machine.
\item {\bf Type B: $x_{k(j)} = 1$ (i.e. machine $k(j)$ is fully active).}
We increase $x_i$ to 
$\min(x_i(1 + 1/c_i n), 1)$ for each machine 
$i\in P(j)$, and correspondingly increase $y_{ij}$ by the effective 
capacity created in each machine. Further,
we increase $y_{k(j)j}$ by 
$6/\eta_{k(j)}(j) n$. 
\end{itemize}

\subsection{Analysis}
Our goal is to show the following bounds
on the makespan and cost of the fractional schedule.
\begin{lemma}\label{lma:fractional-main}
The fractional schedule produced by the online algorithm satisfies 
$\sum_{j\in J} y_{ij} p_{ij} = O(\log m)$ 
for each machine $i$, and  
$\sum_{i\in M} c_i x_i = O(m\log m)$.
\end{lemma}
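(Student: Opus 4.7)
The plan is to analyze a combined potential
\[
\Phi \;=\; \sum_i c_i x_i \;+\; \sum_{i\,:\,x_i=1} c_i\bigl(a^{\ell_i}-1\bigr)
\]
for a suitable constant $a>1$, and to show $\Phi = O(m\log m)$ at every step of the online execution. After preprocessing, every surviving machine has $c_i\ge 1$, $\sum_i c_i x_i\le 2m$, and all loads are zero, so $\Phi_0 = O(m)$. Once the target $\Phi = O(m\log m)$ is established, both halves of the lemma follow at once: the cost bound comes from $\sum_i c_i x_i\le \Phi$, while for a fully active machine $c_i(a^{\ell_i}-1)\le \Phi$ gives $\ell_i = O(\log m)$ (using $c_i\ge 1$), and for a partially active machine Eqn.~\ref{eqn:packingnew} directly yields $\ell_i\le 6 x_i\mathbf{L}<6\mathbf{L} = O(1)$ after scaling $\mathbf{L}=1$. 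A technicality to handle is that when a machine transitions from partially to fully active its exponential term ``switches on'', contributing an extra $c_i(a^{\ell_i}-1)\le c_i(a^6-1) = O(c_i)$ (using $\ell_i\le 6$ at the transition); since the transitioned machines form a subset of the fully active ones, $\sum_{\text{transitioned}}c_i\le\sum_i c_ix_i\le \Phi$, so the cumulative transition contribution is absorbed into the target bound up to a constant factor.

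To control the per-job increase of $\Phi$, I would fix an offline optimum and, for each job $j$, let $i^*(j)$ denote the machine the optimum assigns $j$ to. Jobs are then partitioned into three categories based on the state of $i^*(j)$ when $j$ arrives: (i) $i^*(j)$ is not fully active; (ii) $i^*(j)$ is fully active with $\ell_{i^*(j)}$ below a threshold $T=\Theta(\log m)$; and (iii) $i^*(j)$ is fully active with $\ell_{i^*(j)}\ge T$. For Category~1 I would adapt the Alon~\emph{et al.}\ online set-cover primal--dual argument: since $i^*(j)$ is only partially active, its virtual cost $\eta_{i^*(j)}(j) = c_{i^*(j)}p_{i^*(j)j}$ is small, so every machine in $P(j)\cup\{k(j)\}$ has virtual cost no larger, and amortizing the multiplicative updates $x_i\gets x_i(1+1/c_in)$ against the optimum's unit coverage on $i^*(j)$ telescopes to a total Category-1 contribution of $O(m\log m)$. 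For Category~2 I would directly verify that the ratio of $\Delta\Phi$ to the assigned mass $\Delta y_{ij}$ is a constant multiple of $\eta_{i^*(j)}(j)$; since the Type-B increment is calibrated to $6/(\eta_{k(j)}(j)n)$, this bounds the per-job contribution by $O((\log m)/n)$, giving $O(\log m)$ over all $n$ jobs. For Category~3 I would use an Aspnes~\emph{et al.}-style telescoping on the exponential part: each unit of mass placed on a machine $i$ with $\eta_i(j)\le\eta_{i^*(j)}(j)$ inflates $\Phi$ by $O(\eta_{i^*(j)}(j))$, and summing over Category-3 jobs (each of which the optimum places on $i^*(j)$ with its own load at most $1$) produces a recursion whose solution is $O(m\log m)$.

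The hard part will be making the three charging schemes coexist without double-counting: within the processing of a single job, both the linear cost term (through $x_i$-increases on $P(j)$) and the exponential load term (through $y_{ij}$-increases on fully active machines in $P(j)$ and on $k(j)$) change simultaneously, and machines may cross from partially to fully active mid-job and switch the form of their contribution. Making the bookkeeping work out cleanly will require verifying that the Type-B increment $6/(\eta_{k(j)}(j)n)$ is calibrated precisely so that every unit of load placed on a fully active machine pays for itself against the primal--dual budget used on partially active machines, and that the threshold $T$ is chosen so that Categories~2 and~3 partition the fully active regime without overlap. Combining the three category bounds with $\Phi_0$ and the transition jumps then gives $\Phi = O(m\log m)$ throughout the execution, from which Lemma~\ref{lma:fractional-main} follows.
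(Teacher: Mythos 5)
Your overall architecture matches the paper's: the same potential (linear in $c_ix_i$ for partially active machines, exponential in the load for fully active ones), per-step increases of $O(1/n)$, a classification of the work against a fixed optimum, an Alon~\emph{et al.}-style charging argument for one part and an Aspnes~\emph{et al.}-style self-referential bound for another. The substantive gap is in your Category~(i). The telescoping charge against the multiplicative growth of $x_{i^*(j)}$ only covers steps in which $x_{i^*(j)}$ is actually increased, i.e.\ $i^*(j)\in P(j)\cup\{k(j)\}$. But $i^*(j)$ can be partially active and yet lie strictly after $k(j)$ in the virtual-cost order $M(j)$; in such steps the algorithm does not touch $x_{i^*(j)}$ at all, so there is nothing to telescope against. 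Moreover, your supporting claim that ``every machine in $P(j)\cup\{k(j)\}$ has virtual cost no larger than $\eta_{i^*(j)}(j)$'' holds only in that second case and is false when $i^*(j)\in P(j)$ (machines later in the prefix, and $k(j)$, may be more expensive). The paper needs two distinct arguments here: when $\opt(j)\in P(j)$ it simply counts the multiplicative updates to $x_{\opt(j)}$ (Lemma~\ref{lma:category1}); when $\opt(j)\notin P(j)$ but is partially active it shows (Lemma~\ref{lma:category2}) that within $2c_{\opt(j)}p_{\opt(j)j}n$ steps the job must be fully assigned---either the Type-B increments $6/(\eta_{k(j)}(j)n)\geq 6/(c_{\opt(j)}p_{\opt(j)j}n)$ accumulate to $1$, or the $x_i$'s of the machines in $P(j)\cup\{k(j)\}$ at least double, creating effective capacity at least $\sum_i x_i\geq 1$---and then sums $\sum_j c_{\opt(j)}p_{\opt(j)j}\leq 2m$. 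This capacity-doubling argument is the missing piece; without it your Category~(i) bound does not cover all partially-active steps.

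Your split of the fully active case by a threshold $T=\Theta(\log m)$ also does not work as stated. For a Category~(ii) job the number of algorithmic steps is governed by $\eta_{i^*(j)}(j)\,n$, not by $\log m$, so the per-job potential increase is $O(\eta_{i^*(j)}(j))=O\bigl(c_{i^*(j)}a^{T}p_{i^*(j)j}\bigr)$; with $T=\Theta(\log m)$ this sums to a quantity polynomial in $m$, and the claimed $O((\log m)/n)$ per job has no justification. The threshold is in fact unnecessary: the paper treats all steps in which $\opt(j)$ is fully active uniformly, bounding their total contribution by $2+\tfrac{2}{3}\sum_{i\in M_A}c_ia^{L_i-1}$ and closing the self-referential inequality $\phi\leq O(m\log m)+\tfrac{2}{3}\phi$. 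If you fold your Categories~(ii) and~(iii) into that single recursion (keeping the recursion coefficient, plus the partial-to-full ``transition jump'' term you correctly identified, strictly below $1$ by taking $a$ close enough to $1$), the argument goes through. Finally, the classification should be of individual algorithmic steps rather than of jobs at arrival time: $P(j)$, $k(j)$ and the activation status of $i^*(j)$ change during the many small steps used to schedule a single job, so your arrival-time classification leaves exactly the mid-job transitions you flag as ``the hard part'' unhandled, whereas a per-step classification makes them disappear.
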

\noindent
We introduce a potential function $\phi_i$ for machine $i$ defined as 
\begin{equation*}
\phi_i = \left\{ \begin{array}{ll}
c_i a^{\ell_i-1}, & \text{if machine $i$ is fully active, i.e. $x_i = 1$} \\
c_i x_i, & \text{otherwise.}
\end{array}\right.
\end{equation*}
The cumulative potential function $\phi = \sum_{i\in M} \phi_i$.
Our goal will be to show that $\phi = O(m\log m)$. 
This will immediately imply Lemma~\ref{lma:fractional-main}.

We prove the bound on $\phi$ in three steps. First, we bound the increase of 
$\phi$ in the pre-processing phase; next, we bound the increase of $\phi$ in
each algorithmic step (of either type A or type B); and finally, we bound the 
total number of algorithmic steps. 

\noindent
{\bf Pre-processing.}
The next lemma bounds the increase of $\phi$ in the pre-processing phase.
\begin{lemma}
\label{lma:initial}
At the end of the pre-processing phase, $\phi \leq m$.
\end{lemma}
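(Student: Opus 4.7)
The plan is a direct case analysis on the two types of machines that survive pre-processing. After scaling costs by $\alpha/m$ and discarding machines with scaled cost exceeding $m$, we are left with two categories: \emph{type F} (fully active) machines, which originally had scaled cost at most $1$ and now have $c_i = 1$ and $x_i = 1$; and \emph{type P} (partially active) machines, with scaled cost $c_i \in (1, m]$ and $x_i = 1/m$. Since no job has been assigned yet, $\ell_i = 0$ for every machine.

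I would then compute the two contributions to $\phi$ separately using the piecewise definition of $\phi_i$. A type F machine lies on the fully-active branch and contributes $\phi_i = c_i \cdot a^{\ell_i - 1} = 1 \cdot a^{-1}$, which is at most $1$ provided $a \geq 1$ (a choice guaranteed by the subsequent analysis, where $a$ is the base of an exponential potential). A type P machine lies on the partial branch and contributes $\phi_i = c_i x_i = c_i / m$, which is at most $1$ because $c_i \leq m$ by the discard rule.

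Since there are at most $m$ surviving machines in total, summing the per-machine bounds gives
\[
\phi \;=\; \sum_{i \text{ type F}} a^{-1} \;+\; \sum_{i \text{ type P}} \frac{c_i}{m} \;\leq\; \sum_{i \text{ surviving}} 1 \;\leq\; m,
\]
as required.

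There is no genuine obstacle here: the lemma is essentially a consistency check that the pre-processing parameters have been tuned so that each term in $\phi$ is $O(1)$. The only minor nuance is remembering that type F machines must be evaluated under the exponential branch of the potential, but with $\ell_i = 0$ this merely introduces the harmless factor $a^{-1} \leq 1$ rather than the product $c_i x_i$ one might naively write.
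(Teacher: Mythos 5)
Your proof is correct and follows the same route as the paper: each surviving machine contributes at most $1$ to $\phi$ (the partially active ones because $c_i \leq m$ and $x_i = 1/m$, the fully active ones because $c_i = 1$ and $a^{\ell_i - 1} = a^{-1} \leq 1$ for $a > 1$), and there are at most $m$ machines. Your explicit handling of the exponential branch for the cost-$1$ machines is the only detail the paper leaves implicit.
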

\begin{proof}
The startup cost of each machine that is fully active after pre-processing is 1;
on the other, every partially active machine $i$ has $c_i\leq m$ and $x_i = 1/m$
after pre-processing.
\end{proof}

\noindent
{\bf Single Algorithmic Step.}
Now, we bound the increase in $\phi$ due to a single algorithmic step of either type.
\begin{lemma}
\label{lma:singleA}
The increase in potential in a single algorithmic step of type A 
is at most $2/n$.
\end{lemma}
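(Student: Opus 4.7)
My first step is to identify exactly which machines are touched in a type A step. By inspection of the rule, the updated set is $U := P(j)\cup\{k(j)\}$, with $\{k(j)\}$ read as $\emptyset$ when $k(j)$ is undefined. Every $i\in P(j)$ has $x_i<1$ by the defining property of $P(j)$, and the type A hypothesis guarantees $x_{k(j)}<1$ whenever $k(j)$ is defined. Hence every machine in $U$ is partially active just before the step, so its potential has the linear form $\phi_i = c_i x_i$.

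Next I would unwind the multiplicative update. Since $x_i^{\text{new}} = \min(x_i(1+1/(c_in)),\,1)$, we always have $\Delta x_i \le x_i/(c_in)$. In the smooth case, where $i$ stays partially active after the step, $\phi_i$ keeps the form $c_i x_i$, and
\[
\Delta\phi_i \;=\; c_i\,\Delta x_i \;\le\; c_i\cdot\frac{x_i}{c_i n} \;=\; \frac{x_i}{n}.
\]
Summing over $U$, and invoking $\sum_{i\in P(j)} x_i < 1$ (definition of $P(j)$) together with the trivial bound $x_{k(j)}\le 1$, I would conclude
\[
\Delta\phi \;\le\; \frac{1}{n}\Bigl(\sum_{i\in P(j)} x_i \,+\, x_{k(j)}\Bigr) \;<\; \frac{2}{n}.
\]

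The main obstacle is the transition case, in which the cap bites and some $x_i^{\text{new}}=1$: the contribution of $i$ then switches form from $c_i x_i^{\text{old}}$ to $c_i a^{\ell_i^{\text{new}}-1}$. For the per-machine bound $\Delta\phi_i \le c_i\,\Delta x_i$ to survive, one needs $a^{\ell_i^{\text{new}}-1}\le 1$, i.e.\ the load at the instant of saturation is at most $1$. I would establish this through a load invariant for partially active machines, combining the packing constraint in Eqn.~\ref{eqn:packingnew} with the bound $\Delta\ell_i \le 6\Delta x_i$ implicit in the definition of effective capacity, which controls how fast the load grows with $x_i$. Once this invariant is in hand, transitioning machines also satisfy $\Delta\phi_i \le x_i/n$, and the summation above extends without change to give the claimed bound $\Delta\phi < 2/n$.
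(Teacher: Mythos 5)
Your ``smooth case'' is, in fact, the paper's entire proof: the paper bounds the increase in a type~A step by $\sum_{i\in P(j)\cup\{k(j)\}} c_i\,\bigl(x_i/(c_i n)\bigr) = \bigl(\sum_{i\in P(j)\cup\{k(j)\}} x_i\bigr)/n < 2/n$, using $\sum_{i\in P(j)} x_i < 1$ and $x_{k(j)}<1$, and it says nothing about a machine crossing from partially to fully active during the step. So the portion of your argument that you actually carry out coincides with the published proof, including the identification of the updated set and the bound $\Delta x_i \le x_i/(c_i n)$.

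The gap is in the transition case, which is precisely the step you single out as the main obstacle. You need $\ell_i\le 1$ at the instant $x_i$ saturates at $1$, and you propose to derive it from Eqn.~\ref{eqn:packingnew} together with $\Delta\ell_i\le 6\,\Delta x_i$. But those ingredients only yield $\ell_i\le 6x_i{\bf L}$ (i.e.\ $\ell_i\le 6x_i$ after the processing times are scaled so that the optimal makespan is $1$), and cumulatively $\ell_i\le 6\,(x_i-1/m)$; at saturation this permits $\ell_i$ close to $6$, not $\le 1$. Since $a>1$, the quantity $a^{\ell_i-1}$ can then be as large as roughly $a^{5}>1$, so the switch of $\phi_i$ from $c_i x_i$ to $c_i a^{\ell_i-1}$ could jump by $\Theta(c_i)$, which is not covered by $c_i\,\Delta x_i\le x_i/n$. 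Hence the invariant you invoke is not derivable from what you cite, and the claim that the summation ``extends without change'' is unsupported as written. (To be fair, the paper's own one-line proof charges only $c_i\,\Delta x_i$ and does not discuss this form switch either, so you have put your finger on a genuine subtlety of the potential's definition; but your proposed resolution does not establish the needed load bound, so your proof is incomplete at exactly the point you flagged.)
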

\begin{proof}
The total increase in potential in an algorithmic step of type A (due to increase
in the value of $x_i$ for machines $i\in P(j)\cup \{k(j)\}$) is 
at most 
$\sum_{i\in P(j)\cup \{k(j)\}} c_i (x_i/c_i n) = (\sum_{i\in P(j)\cup \{k(j)\}} x_i)/n < 2/n$.
\end{proof}
\noindent
\begin{lemma}
\label{lma:singleB}
For any constant $1 < a < 13/12$, the increase in potential in a single algorithmic 
step of type B is at most $2/n$.
\end{lemma}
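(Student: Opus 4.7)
The plan is to decompose the potential change in a type B step into contributions from the partially active machines in $P(j)$ and from the fully active machine $k(j)$, bound each by roughly $1/n$, and combine.

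For the machines $i \in P(j)$, the situation is identical to type A: each such $i$ is partially active with potential $\phi_i = c_i x_i$, and $x_i$ is multiplied by at most $1 + 1/(c_i n)$, so $\Delta \phi_i \le c_i \cdot x_i/(c_i n) = x_i/n$. Summing and using the defining property $\sum_{i \in P(j)} x_i < 1$ of the maximal prefix yields a total increase of at most $1/n$ from this group. In contrast to type A, we do not add a term for $k(j)$ here, since $x_{k(j)} = 1$ is not touched.

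For machine $k(j)$, the potential is $\phi_{k(j)} = c_{k(j)} a^{\ell_{k(j)}-1}$, and the algorithm raises $y_{k(j)j}$ by $6/(\eta_{k(j)}(j)\, n)$, which increases the load by $\Delta \ell = p_{k(j),j} \cdot 6/(\eta_{k(j)}(j)\, n) = 6/(c_{k(j)} a^{\ell_{k(j)}-1} n)$. Thus
\[
\Delta \phi_{k(j)} = c_{k(j)} a^{\ell_{k(j)}-1}\bigl(a^{\Delta \ell} - 1\bigr).
\]
I would then invoke the convexity inequality $a^{z} - 1 \le (a-1)\, z$ valid for $z \in [0,1]$, which gives $\Delta \phi_{k(j)} \le (a-1) \cdot c_{k(j)} a^{\ell_{k(j)}-1} \cdot \Delta \ell = 6(a-1)/n$. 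For $1 < a < 13/12$ this is strictly less than $1/(2n)$. Adding the two contributions gives a total bump of at most $1/n + 1/(2n) = 3/(2n) < 2/n$, as required.

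The main obstacle is justifying that $\Delta \ell \le 1$ so the convexity step applies. This amounts to checking $c_{k(j)} a^{\ell_{k(j)}-1} n \ge 6$. After the pre-processing, every machine has $c_i \ge 1$ and $a^{\ell_i - 1} \ge 1/a > 12/13$, so the inequality holds as long as $n \gtrsim 6a$, which is easily ensured by assuming $n$ is sufficiently large (or by handling tiny $n$ as a trivial base case by direct calculation, since then the instance has only a constant number of jobs). If one wanted to avoid the $\Delta \ell \le 1$ assumption altogether, an alternative would be to use $a^z - 1 \le z\, a^{z} \ln a$, which, combined with $a < 13/12$ and $\ln a < 1/12$, yields the same $O(1/n)$ bound on $\Delta \phi_{k(j)}$ with a slightly different constant; either route lands comfortably below the slack $2/n$ allowed by the lemma.
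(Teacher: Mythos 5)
Your proof is correct and is essentially the paper's own argument: the same decomposition into the $P(j)$ contribution (bounded by $1/n$ using $\sum_{i\in P(j)} x_i < 1$, with no term for $k(j)$ since $x_{k(j)}=1$ is untouched) and the $k(j)$ contribution computed from the same load increment $\Delta\ell = 6/(c_{k(j)} a^{\ell_{k(j)}-1} n)$. The only difference is the final elementary inequality: the paper bounds $(1+(a-1))^{\Delta\ell} \le e^{(a-1)\Delta\ell} \le 1 + 2(a-1)\Delta\ell$, which requires only $(a-1)\Delta\ell \le 1$ (always true since $c_i \ge 1$, $a^{\ell_i-1}\ge 1/a$ and $a-1<1/12$), so the small-$n$ caveat you must patch --- needing $\Delta\ell \le 1$ for the convexity step --- simply never arises there.
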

\begin{proof}
The total increase in potential for machines $i\in P(j)$ due to an 
algorithmic step of type B is at most
$\sum_{i\in P(j)} c_i (x_i/c_i n) = (\sum_{i\in P(j)} x_i)/n < 1/n$.
The other source of increase in potential is the scheduling of 
a fraction of job $j$ to machine $k(j)$, due of which
the load on machine 
$k(j)$ increases by $6/c_{k(j)}a^{\ell_{k(j)}-1} n$. The resulting
increase of potential $\phi_{k(j)}$ is 
\begin{eqnarray*}
c_{k(j)} (a^{\ell_{k(j)} - 1 + 6/c_{k(j)}a^{\ell_{k(j)} - 1} n} - a^{\ell_{k(j)} - 1}) 
& = & c_{k(j)} a^{\ell_{k(j)} - 1} (a^{6/c_{k(j)}a^{\ell_{k(j) - 1}} n} - 1) \\
\quad = \quad c_{k(j)} a^{\ell_{k(j)} - 1} \left(\left(1 + (a-1)\right)^{6/c_{k(j)} a^{\ell_{k(j)} - 1} n} -  1 \right)
& < & c_{k(j)} a^{\ell_{k(j)} - 1} \cdot \frac{12(a-1)}{c_{k(j)} a^{\ell_{k(j)} - 1} n} 
\quad < \quad \frac{1}{n}.
\end{eqnarray*}
The penultimate inequality follows from the property that
$(1 + x)^{1/y} < e^{x/y} < 1 + 2x/y$, 
for any $y \geq x > 0$.
\end{proof}

\noindent
{\bf Number of Algorithmic Steps.}
We classify the algorithmic steps according to a fixed 
optimal offline (integer) schedule that we call {\sc opt}. Suppose {\sc opt}
assigns job $j$ to machine $\opt (j)$, and let $M_{\opt}$ denote the machines
that are active in the optimal offline schedule. The three categories are:
\begin{enumerate}
\item $\opt (j) \in P(j)$.
\item $\opt (j) \notin P(j)$ and $\opt (j)$ is partially active.
\item $\opt (j)$ is fully active.
\end{enumerate}
The next lemma bounds the total increase in potential due to algorithmic 
steps in the first category above.
\begin{lemma}
\label{lma:category1}
The total increase in potential due to all algorithmic steps in the first 
category is $O(m \log m)$.
\end{lemma}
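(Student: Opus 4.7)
The plan is to bound the total number of category 1 algorithmic steps and combine with the per-step potential increase bound of $2/n$ established in Lemmas~\ref{lma:singleA} and~\ref{lma:singleB}. The key observation is that in every category 1 step, $\opt(j) \in P(j)$, so the algorithm executes the update $x_{\opt(j)} \gets \min(x_{\opt(j)}(1+1/c_{\opt(j)} n), 1)$, multiplicatively increasing $x_{\opt(j)}$ by a factor of at least $1 + 1/c_{\opt(j)} n$ whenever it has not yet saturated. This lets us amortize category 1 steps against the growth of $x_{\opt(j)}$.

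First I would argue that once $x_i$ reaches $1$, machine $i$ can never again appear in any prefix $P(j')$: the definition of $P(j')$ as the maximal prefix with $\sum_{i' \in P(j')} x_{i'} < 1$ forbids inclusion of a machine with $x_i = 1$. In particular, this rules out machines whose post-pre-processing value is already $x_i = 1$ (those with scaled startup cost at most $1$) from ever appearing as $\opt(j)$ in a category 1 step, since category 1 requires $\opt(j)$ to be partially active. Hence every category 1 step has $\opt(j) \in M_{\opt}$ with initial value $x_{\opt(j)} = 1/m$ and scaled startup cost $c_{\opt(j)} > 1$. For each such machine $i$, the number of multiplicative updates by factor $1 + 1/c_i n$ needed to take $x_i$ from $1/m$ to $1$ is at most
\begin{equation*}
\left\lceil \frac{\ln m}{\ln(1 + 1/c_i n)} \right\rceil = O(c_i n \log m),
\end{equation*}
so at most $O(c_i n \log m)$ category 1 steps can have $\opt(j) = i$.

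Summing over $i \in M_{\opt}$ and using that $\sum_{i \in M_{\opt}} c_i \leq 2m$ (the post-pre-processing cost of {\sc opt}), the total number of category 1 steps is $O(n \log m) \cdot O(m) = O(mn \log m)$. Multiplying by the per-step potential bound of $2/n$ gives a total increase of $O(m \log m)$, establishing the lemma. The step I expect to be most subtle is the saturation claim: it relies on the specific definition of $P(j)$ together with the clean separation of category 1 (where $\opt(j)$ lies in $P(j)$ and is therefore partially active) from category 3 (where $\opt(j)$ is fully active), so that the charging against multiplicative growth of $x_{\opt(j)}$ is never disrupted by a machine that has already been ``used up'' in the potential.
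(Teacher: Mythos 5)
Your proposal is correct and follows essentially the same route as the paper: both charge each category 1 step to the multiplicative increase of $x_{\opt(j)}$ (which starts at $1/m$ and grows by a factor $1+1/(c_{\opt(j)}n)$ per step until capped at 1), bound the number of such steps by $O\left(\sum_{i\in M_{\opt}} c_i n\log m\right) = O(mn\log m)$ using $\sum_{i\in M_{\opt}} c_i = O(m)$, and multiply by the per-step potential bound $2/n$ from Lemmas~\ref{lma:singleA} and~\ref{lma:singleB}. Your explicit observation that a fully active machine can never lie in $P(j)$ is a (correct) elaboration the paper leaves implicit.
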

\begin{proof}
In any algorithmic step of the first category, the value of $x_{\opt (j)}$
either increases to 
$x_{\opt (j)}\left(1 + \frac{1}{c_{\opt (j)} n}\right)$ or to 1. 
Since $x_i$ is initialized
to at least $1/m$ for each machine $i$ in the pre-processing phase, 
there are at most 
$m + \sum_{i\in M_{\opt}} c_i n\log m = O(m n\log m)$ 
algorithmic steps of the first category.
The lemma now follows from Lemmas~\ref{lma:singleA} and \ref{lma:singleB}.
\end{proof}
\noindent
The next lemma bounds the total increase in potential due to algorithmic 
steps in the second category.
\begin{lemma}
\label{lma:category2}
The total increase in potential due to all algorithmic steps in the second 
category is $O(m)$.
\end{lemma}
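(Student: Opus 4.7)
The plan is to charge each job $j$ in category~2 a potential increase of $O(\eta_{\opt(j)}(j))$ (plus a negligible $O(1/n)$ tail), where $\eta_{\opt(j)}(j) = c_{\opt(j)} p_{\opt(j) j}$ because $\opt(j)$ is partially active. Summing over all category-2 jobs then yields $\sum_j c_{\opt(j)} p_{\opt(j) j} = \sum_{i\in M_{\opt}} c_i \ell_i^{(\opt)} \leq \sum_{i\in M_{\opt}} c_i = O(m)$, since $\opt$ has makespan at most $1$ and total cost $O(m)$ after pre-processing, with the $O(1/n)$ tails aggregating to $O(1)$.

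The structural lever for the per-job bound is that $\opt(j)\notin P(j)$, together with the partial activity of $\opt(j)$, places $\opt(j)$ at or after every machine of $P(j)\cup\{k(j)\}$ in the virtual-cost ordering $M(j)$; in a type~B step, partial activity of $\opt(j)$ further excludes $\opt(j) = k(j)$ (which is fully active), so $\opt(j)$ sits strictly after $k(j)$. Consequently $\eta_i(j)\leq \eta_{\opt(j)}(j)$ for every machine $i$ whose variables are updated in the step.

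Next I will bound the per-step ratio $\Delta\phi^{(s)}/\Delta Y^{(s)}$, where $\Delta Y^{(s)} = \sum_i \Delta y_{ij}^{(s)}$. For a partially active machine $i$ involved in step $s$, $\Delta\phi_i = c_i\Delta x_i = x_i/n$ and $\Delta y_{ij} = \min(2x_i,\,6\Delta x_i/p_{ij})$, so the ratio equals either $c_i p_{ij}/6 = \eta_i(j)/6 \leq \eta_{\opt(j)}(j)/6$ or $1/(2n)$, depending on which term in the minimum binds. For the fully active $k(j)$ in a type~B step, the computation inside Lemma~\ref{lma:singleB} already yields $\Delta\phi_{k(j)}<1/n$ while $\Delta y_{k(j) j} = 6/(\eta_{k(j)}(j) n)$, giving ratio $<\eta_{k(j)}(j)/6\leq \eta_{\opt(j)}(j)/6$. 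Combining, $\Delta\phi^{(s)} \leq \bigl(\eta_{\opt(j)}(j)/6 + 1/(2n)\bigr)\Delta Y^{(s)}$, and since the processing of $j$ halts the moment $\sum_i y_{ij}\geq 1$, the $\Delta Y^{(s)}$ values sum to at most $1$ (up to a final-step overshoot whose potential contribution is at most $2/n$ by Lemmas~\ref{lma:singleA} and~\ref{lma:singleB}), yielding the per-job bound claimed above.

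The main obstacle I anticipate is the boundary step in which a machine $i\in P(j)$ transitions from partially to fully active, because the form of $\phi_i$ switches from $c_i x_i$ to $c_i a^{\ell_i-1}$ at that instant and this jump is not captured by the smooth ratio argument. However, Eqn.~\ref{eqn:packingnew} keeps $\ell_i$ bounded by a constant so long as $i$ is partially active, so the one-time jump is $O(c_i)$ per machine, contributing at most $O(\sum_i c_i) = O(m)$ across the entire execution and therefore fitting inside the claimed overall $O(m)$ bound.
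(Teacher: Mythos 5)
Your core charging argument is a genuinely different, and essentially sound, route to the bound. The paper instead bounds the \emph{number} of category-2 steps for each job $j$ by $2c_{\opt(j)}p_{\opt(j)j}n$, via a two-case analysis (if most steps are of type B, each creates effective capacity at least $6/(\eta_{\opt(j)}(j)n)$ since $\eta_{k(j)}(j)\le\eta_{\opt(j)}(j)$; if most are of type A, the multiplicative growth estimate of Eqn.~\ref{eqn:diff} shows $\sum_{i\in R(j)}x_i^{(2)}$ worth of capacity is created), and then multiplies by the $2/n$ per-step bound of Lemmas~\ref{lma:singleA} and~\ref{lma:singleB}. You bypass the step count entirely by charging each machine's potential increment to the fraction of the job it receives at rate $\eta_i(j)/6\le\eta_{\opt(j)}(j)/6$ (or $1/(2n)$ when the cap $2x_i$ binds), and using that the cumulative assigned fraction before the final step is below $1$; the final step and the cap-bound terms cost only $O(1/n)$ per job. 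This avoids the paper's case analysis and the doubling computation, and the concluding summation $\sum_j c_{\opt(j)}p_{\opt(j)j}\le\sum_{i\in M_{\opt}}c_i\le 2m$ coincides with the paper's.

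The genuine flaw is in your last paragraph, handling the instant a machine in $P(j)\cup\{k(j)\}$ becomes fully active and $\phi_i$ switches from $c_ix_i$ to $c_ia^{\ell_i-1}$. Your bound ``$O(\sum_i c_i)=O(m)$'' is false: after pre-processing each cost lies in $[1,m]$, so $\sum_{i\in M}c_i$ can be $\Theta(m^2)$; and restricting to the machines the algorithm actually fully activates does not help, since their total cost is precisely the quantity the whole analysis is trying to bound (it is only shown to be $O(m\log m)$ in the end), so the patch is circular. To be fair, the paper's own Lemmas~\ref{lma:singleA} and~\ref{lma:singleB} account only for the increments $c_i\Delta x_i$ and silently ignore the same one-time jump. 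A correct treatment must either make the potential continuous across the transition, or observe that Eqn.~\ref{eqn:packingnew} gives $\ell_i\le 6x_i\le 6$ at the transition, so the total of all jumps is at most $(a^{5}-1)\sum_{i\in M_A}c_i\le(a^{5}-1)\,a\sum_{i\in M_A}c_ia^{L_i-1}$, which can be absorbed into the final recursion $\phi\le O(m\log m)+(2/3)\phi$ only by taking $a$ sufficiently close to $1$ so that the extra coefficient stays bounded away from $1/3$ --- not by asserting that the machine costs sum to $O(m)$.
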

\begin{proof}
Consider the first 
$2 c_{\opt (j)} p_{\opt (j)j} n$ algorithmic steps in the second category for any 
particular job $j$. We have two cases: 
\begin{itemize}
\item {\bf Case~1}: these
algorithmic steps contain at least $c_{\opt (j)} p_{\opt (j)j} n$ steps of 
type B, or
\item {\bf Case~2}: these algorithmic steps contain at least
$c_{\opt (j)} p_{\opt (j)j} n$ steps of type A.
\end{itemize} 
In case~1, each algorithmic step of type B creates an effective capacity
of $6/\eta_{k(j)}(j) n$ in machine $k(j)$. Since in each such algorithmic
step, $\eta_{k(j)}(j) \geq \eta_{\opt (j)}(j) = c_{\opt (j)}p_{\opt (j)j}$,
the total effective capacity created by these algorithmic steps is 
at least 1.

In case~2, let $R(j)$ denote the set $P(j)\cup \{k(j)\}$ 
for the last of these algorithmic steps. (Note that since 
$\opt (j)\notin P(j)$, $k(j)$ is defined.) Further, let $x^{(1)}_i$ and
$x^{(2)}_i$ respectively denote 
the value of $x_i$ for machine $i$ before the first
algorithmic step for job $j$, and after the last
algorithmic step. 
For each machine $i\in R(j)$, $x_i$ has been increased in 
each of at least 
$c_{\opt (j)} p_{\opt (j)j} n$ algorithmic steps of type A. Thus, for each 
machine $i\in R(j)$,
\begin{equation}
\label{eqn:diff}
x^{(2)}_i 
\geq x^{(1)}_i \left(1 + \frac{1}{c_i n}\right)^{c_{\opt (j)} p_{\opt (j)j} n} 
= x^{(1)}_i \left(\left(1 + \frac{1}{c_i n}\right)^{c_i n}\right)^{\frac{c_{\opt (j)} p_{\opt (j)j}}{c_i}} 
\geq x^{(1)}_i 2^{\frac{c_{\opt (j)} p_{\opt (j)j}}{c_i}} 
\geq x^{(1)}_i 2^{p_{ij}}
\end{equation}
since $c_{\opt (j)} p_{\opt (j)j} \geq c_i p_{ij}$ for all machines $i\in R(j)$. The 
total effective capacity created in these steps is 
\begin{equation*}
\frac{6 (x^{(2)}_i - x^{(1)}_i)}{p_{ij}} 
\geq \frac{6 x^{(2)}_i (1 - 2^{-p_{ij}})}{p_{ij}}
\geq \frac{6 x^{(2)}_i}{2}
> x^{(2)}_i.
\end{equation*}
The first inequality follows from Eqn.~\ref{eqn:diff} while the 
second inequality follows from the observation that for any $z \leq 1$,
we have
$(1 - 2^{-z})/z \geq [1 - (1 - z + z^2/2)]/z = 1 - z/2 \geq 1/2$.
Hence, the effective capacity created by these algorithmic steps is at least
$\sum_{i\in R(j)} x^{(2)}_i \geq 1$. The lemma now follows from
Lemmas~\ref{lma:singleA} and \ref{lma:singleB} coupled with the fact that
$\sum_{j\in J} c_{\opt (j)} p_{\opt (j)j} \leq 2m$.
\end{proof}
\noindent
Finally, we bound the total increase of potential due to algorithmic 
steps in the third category.
\begin{lemma}
\label{lma:category3}
For any $1 < a < 13/12$,
the total increase in potential due to all algorithmic steps in the third 
category is at most $2 + (2/3)\sum_{i\in M_A} c_i a^{L_i - 1}$, 
where $L_i$ is the final load on machine $i$ in the schedule produced
by the algorithm and $M_A$ is the set of machines that are fully 
activated by the algorithm.
\end{lemma}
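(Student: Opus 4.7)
The plan is to amortize the potential gain against the fractional coverage created by the algorithm, step by step, and then appeal to OPT's makespan constraint. Fix $j$ in category 3 and let $i^* = \opt(j)$; since $i^*$ is fully active, $\eta_{i^*}(j) = c_{i^*}a^{\ell_{i^*}-1}p_{i^*j}$, and since $i^*\notin P(j)$ the virtual-cost ordering forces $\eta_i(j)\leq \eta_{i^*}(j)$ for every $i\in P(j)\cup\{k(j)\}$. The central claim is that in any algorithmic step, the potential gain $\Delta\phi$ is at most $(1/6)\,\eta_{i^*}(j)$ times the effective capacity $\Delta\kappa$ added in that step. For the $x_i$-growth contribution on a partially active $i$ (present in both types), $\Delta\phi_i = c_i\,\Delta x_i$ while the created effective capacity is at least $(6/p_{ij})\,\Delta x_i$, so the local ratio is $c_i p_{ij}/6 = \eta_i(j)/6$. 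For the $y_{k(j)j}$ contribution on a fully active $k(j)$ in a type B step, the computation behind Lemma~\ref{lma:singleB} gives $\Delta\phi_{k(j)} < 12(a-1)/n$ against $\Delta\kappa = 6/(n\,\eta_{k(j)}(j))$, yielding ratio $2(a-1)\,\eta_{k(j)}(j)$; the choice $a<13/12$ forces $2(a-1)<1/6$, so both local ratios are bounded by $\eta_{i^*}(j)/6$.

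Next I bound the total coverage added while $j$ is being processed. The algorithm halts once $\sum_i y_{ij}\geq 1$; combined with the relaxed constraint $y_{ij}\leq 2x_i$ and the fact that only machines in $P(j)\cup\{k(j)\}$ receive positive $y_{ij}$ (with $\sum x_i\leq 2$ over this set), the total coverage added during processing of $j$ is at most $4$. Multiplying by the per-step ratio gives
\[
\sum_{\text{steps for }j}\Delta\phi \;\leq\; \tfrac{4}{6}\,\eta_{i^*}(j) + O(1/n) \;=\; \tfrac{2}{3}\,c_{i^*}\,a^{\ell_{i^*}(t_j)-1}\,p_{i^*j} + O(1/n),
\]
where the $O(1/n)$ slack absorbs any minor potential discontinuity when a machine transitions from partially to fully active inside the processing of $j$.

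Finally, I sum over all category 3 jobs and group by $i=\opt(j)$. The $O(1/n)$ slacks aggregate to at most $2$ over the $n$ jobs. For each $i$ that appears (necessarily $i\in M_A$), load monotonicity gives $\ell_i(t_j)\leq L_i$, and OPT's makespan bound gives $\sum_{j:\opt(j)=i}p_{ij}\leq 1$, so $\sum_{j:\opt(j)=i,\,j\in\text{cat 3}} a^{\ell_i(t_j)-1}p_{ij}\leq a^{L_i-1}$. Assembling these estimates yields $2 + (2/3)\sum_{i\in M_A} c_i\, a^{L_i-1}$, as claimed.

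The main obstacle is calibrating three quantitative ingredients simultaneously so that the per-step ratio is uniformly $1/6$: the base $a<13/12$ chosen to tame the exponential, the factor of $6$ in the definition of effective capacity that matches the relaxed packing constraint, and the $y\leq 2x$ relaxation that caps per-job total coverage at $4$. An auxiliary subtlety is that $\eta_{k(j)}(j)$ and $\eta_{i^*}(j)$ drift as loads and $x$-values update within a single job, so the ordering inequality $\eta_{k(j)}(j)\leq \eta_{i^*}(j)$ must be applied against the current virtual-cost ordering at each step; load monotonicity preserves it throughout the processing of $j$.
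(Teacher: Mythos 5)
Your overall strategy is essentially the paper's: bound the potential accrued while a category-3 job $j$ is processed by $(2/3)\eta_{\opt(j)}(j)$ plus an $O(1/n)$ term, then sum over jobs grouped by $\opt(j)$ using monotonicity of the loads and {\sc opt}'s unit makespan. The paper just counts steps (each step raises $\phi$ by at most $2/n$ and, unless it covers the whole job, assigns at least $3/(\eta_{\opt(j)}(j)n)$ of it), whereas you amortize per step against effective capacity. The gap is in your central per-step claim. The effective capacity created by raising $x_i$ is $\min(2x_i,\,6\Delta x_i/p_{ij})$, not ``at least $(6/p_{ij})\Delta x_i$'': when $\eta_i(j)=c_ip_{ij}<3/n$ the minimum is $2x_i$, and the local ratio is then $c_i\Delta x_i/(2x_i)\le 1/(2n)$, which exceeds $\eta_{\opt(j)}(j)/6$ whenever $\eta_{\opt(j)}(j)<3/n$ (entirely possible, since processing times can be arbitrarily small). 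So the inequality $\Delta\phi\le(\eta_{\opt(j)}(j)/6)\,\Delta\kappa$ fails exactly in the regime that the paper isolates: there the assigned fraction per step is lower-bounded by $\min\bigl(3/(\eta_{\opt(j)}(j)n),\,1\bigr)$, and when the minimum is $1$ (the entire job covered in a single step) that step is charged $2/n$ directly via Lemmas~\ref{lma:singleA} and \ref{lma:singleB}; summed over all jobs this is precisely the additive $2$ in the statement. In your write-up the additive $2$ is instead attributed to a ``potential discontinuity when a machine becomes fully active,'' which is a different (and itself unjustified) matter, so the steps where your ratio breaks are simply not covered by your accounting.

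A second, smaller problem is the constant: the coverage accumulated before the final step can be arbitrarily close to $1$, and the final step can itself add up to $2\sum_{i\in P(j)\cup\{k(j)\}}x_i<4$, so your ``total coverage at most $4$'' should be about $5$, which under your own amortization yields $(5/6)\eta_{\opt(j)}(j)$ rather than the claimed $(2/3)\eta_{\opt(j)}(j)$. Both defects are repairable inside your framework: charge the (at most one per job) job-completing step $2/n$ outright using Lemmas~\ref{lma:singleA} and \ref{lma:singleB}, observe that capped contributions have ratio at most $1/(2n)$ and only $O(1)$ capacity per job (hence $O(1/n)$ potential per job), and apply your ratio only to the at most one unit of capacity created before the completing step; this recovers (indeed improves) the stated bound. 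But as written, the key per-step inequality and the constant $2/3$ are not established.
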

\begin{proof}
First,
we consider an algorithmic step of type A. In such a step, $k(j)$ must 
be defined since $\opt (j)\notin~P(j)$; thus,
$\sum_{i\in P(j)\cup \{k(j)\}} x_i \geq 1$. Further, for each machine
$i\in P(j)\cup \{k(j)\}$, we have 
$c_i p_{ij} = \eta_i(j) \leq \eta_{\opt (j)}(j)$.
Thus, the fraction of job $j$ assigned in this algorithmic step is at least
\begin{equation*}
\sum_{i\in P(j)\cup \{k(j)\}} \min\left(\frac{6 x_i}{c_i p_{ij} n}, 2x_i \right) \geq \min\left(\frac{3}{\eta_{\opt (j)}(j) n}, 1\right),
\end{equation*}
since $\sum_{i\in P(j)\cup\{k(j)\}} x_i \geq 1$.
We first consider the situation where the whole of job $j$ was assigned 
in this algorithmic step. In this case, the increase in potential due
to job $j$ is 
$\sum_{i\in P(j)\cup \{k(j)\}} (c_i x_i)/(c_i n) < 2/n$; cumulatively for
all jobs, such increases in potential add up to at most 2.
Otherwise, the sum of increase in $y_{ij}$ over all machines in 
this algorithmic step is at least $3/\eta_{\opt (j)j} n$. 
Now, consider an algorithmic step of type B.
Then the increase in $y_{k(j)j}$ is $6/\eta_{k(j)j} n \geq 6/\eta_{\opt (j)j} n$ 
since $\eta_{\opt (j)j} \geq \eta_{k(j)j}$. In either case, 
the total number of algorithmic steps in the third category
for job $j$ is at most $\eta_{\opt (j)j} n/3$. 

Note that $\eta_{\opt (j)j} \leq c_{\opt (j)} a^{L_{\opt (j)}-1} p_{\opt (j)j}$.
Therefore, summing over all jobs, the total increase in potential 
due to algorithmic steps in the third category is bounded by 
(using Lemmas~\ref{lma:singleA} and \ref{lma:singleB})
\begin{eqnarray*}
2 + \frac{2}{3} \sum_{j\in J} c_{\opt (j)} a^{L_{\opt (j)}-1} p_{\opt (j)j}
& = & 2 + \frac{2}{3} \sum_{i\in M_{\opt} \cap M_A} c_i a^{L_i-1} \left(\sum_{j: \opt (j) = i} p_{ij} \right) \\
\quad \leq \quad 2 + \frac{2}{3} \sum_{i\in M_{\opt} \cap M_A} c_i a^{L_i-1}
& \leq & 2 + \frac{2}{3} \sum_{i\in M_A} c_i a^{L_i-1},
\end{eqnarray*}
where the penultimate inequality follows from the fact that the
makespan of the optimal offline schedule is at most 1.
\end{proof}
\noindent
Finally, we bound the total potential $\phi$; this immediately yields
Lemma~\ref{lma:fractional-main}.
\begin{lemma}
The online fractional algorithm produces a schedule that satisfies 
$\phi = O(m\log m)$.
\end{lemma}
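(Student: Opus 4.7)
The plan is to simply collect the four bounds already established: the pre-processing bound (Lemma~\ref{lma:initial}) gives $\phi \leq m$ at the start of the online phase, and the three category lemmas (Lemmas~\ref{lma:category1}, \ref{lma:category2}, \ref{lma:category3}) bound the cumulative increase of $\phi$ over the online phase by $O(m\log m)$, $O(m)$, and $2 + (2/3)\sum_{i\in M_A} c_i a^{L_i - 1}$ respectively. Since every algorithmic step falls into exactly one of the three categories (they partition the steps according to the behavior of the fixed offline schedule {\sc opt}), summing these four contributions gives an upper bound on the final value of $\phi$.

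The essential move, and the only non-routine step, is handling the self-referential term $(2/3)\sum_{i\in M_A} c_i a^{L_i - 1}$ that appears in the category~3 bound. The key observation is that this quantity is precisely the contribution of the fully active machines to the final potential: by definition, $\phi_i = c_i a^{\ell_i - 1}$ for every $i \in M_A$ evaluated at the final load $L_i$, so
\begin{equation*}
\sum_{i\in M_A} c_i a^{L_i - 1} = \sum_{i\in M_A} \phi_i \leq \phi.
\end{equation*}
Thus the category~3 bound contributes at most $2 + (2/3)\phi$ to the right-hand side.

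Putting everything together, I would write
\begin{equation*}
\phi \leq m + O(m\log m) + O(m) + 2 + \frac{2}{3}\phi,
\end{equation*}
and then rearrange to obtain $(1/3)\phi \leq O(m \log m)$, i.e. $\phi = O(m\log m)$, which is exactly the claim. Lemma~\ref{lma:fractional-main} then follows immediately: the makespan bound comes from $c_i a^{\ell_i - 1} \leq \phi = O(m\log m)$ for every fully active machine (using $c_i \geq 1$ after pre-processing, which yields $\ell_i = O(\log m)$ by taking $\log_a$), and the cost bound comes from $\sum_i c_i x_i$ being dominated by $\phi$ on partially active machines and by the pre-processing count of fully active machines.

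The main conceptual obstacle is recognizing that one should not try to bound the category~3 term absolutely but instead absorb it into the left-hand side; the constant $2/3 < 1$ in Lemma~\ref{lma:category3} is precisely what makes this absorption work, and it is ultimately enabled by the choice $1 < a < 13/12$ that controls the exponential blowup in each type-B step.
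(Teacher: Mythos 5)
Your proof is correct and follows essentially the same route as the paper: the paper's own proof simply combines Lemmas~\ref{lma:initial}, \ref{lma:category1}, \ref{lma:category2} and \ref{lma:category3} into $\phi \leq O(m\log m) + (2/3)\phi$ and rearranges, which is exactly your absorption argument (you just make explicit the step $\sum_{i\in M_A} c_i a^{L_i-1} = \sum_{i\in M_A}\phi_i \leq \phi$ that the paper leaves implicit).
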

\begin{proof}
Lemmas~\ref{lma:initial}, \ref{lma:category1}, 
\ref{lma:category2} and \ref{lma:category3} imply
$\phi \leq O(m\log m) + (2/3)\phi$,
which proves the lemma.
\end{proof}

\section{The Online Randomized Rounding Procedure} 
\label{sec:rounding}

In this section, we give an online randomized rounding scheme for the fractional solution produced by
the algorithm in the previous section. 

\subsection{The Algorithm}
For each machine $i$, we select (offline) a number $r_i$ uniformly at random and 
independently from $[0, 1]$. In response to a new job $j$ arriving online, the
algorithm updates the schedule in three steps:
\begin{itemize}
\item {\bf Fractional step.} The fractional schedule is updated according to the algorithm 
described in the previous section.
\item {\bf Activation step.} Each inactive machine $i$ that satisfies $r_i \leq 5 x_i(j)\ln (mn)$
is activated.
\item {\bf Assignment step.} Let $M_A(j)$ be the set of active machines in the integer solution. Let 
$z_{ij} = y_{ij}/2 x_i(j)$ if $x_i(j) < 1/5\ln (mn)$ and $z_{ij} = y_{ij}$ otherwise. 
Let $q_{ij}$ be the normalized probability proportional to $z_{ij}$ in a distribution defined on the 
set of machines $M_A(j)$. We assign job $j$ to machine $i$ with probability $q_{ij}$.
\end{itemize}
%

\subsection{Analysis}
%
The next lemma is an immediate consequence of the fact that machine 
$i$ is active in the integer solution with probability
$\min(5x_i\ln (mn), 1)$.
\begin{lemma}
\label{lma:cost-integer}
The total startup cost of all machines activated in the integer
schedule is $O(m \log m\log (mn))$ in expectation.
\end{lemma}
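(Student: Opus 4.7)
The plan is to use linearity of expectation to express the total expected startup cost as a sum over machines of $c_i$ times the activation probability, then apply the fractional cost bound from Lemma~\ref{lma:fractional-main}.

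First I would observe that, because $r_i \in [0,1]$ is drawn uniformly and (crucially) because $x_i(j)$ is monotonically non-decreasing in $j$ during the fractional step (the algorithm only ever raises the $x_i$ values), the activation criterion $r_i \leq 5 x_i(j) \ln(mn)$ is triggered for the first time precisely at the job $j$ when $5 x_i(j) \ln(mn)$ first exceeds $r_i$. Consequently, machine $i$ ends up activated if and only if $r_i \leq 5 x_i^{\star} \ln(mn)$, where $x_i^{\star}$ denotes the final value of $x_i$ produced by the fractional algorithm. Since $r_i$ is uniform on $[0,1]$, this gives
\[
\Pr[\text{machine } i \text{ is activated}] \;=\; \min\bigl(5 x_i^{\star} \ln(mn),\,1\bigr) \;\leq\; 5 x_i^{\star} \ln(mn).
\]

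Next I would sum over all machines. By linearity of expectation, the expected total startup cost in the integer schedule is
\[
\ex\Bigl[\sum_{i\in M_A} c_i\Bigr] \;=\; \sum_{i\in M} c_i \cdot \Pr[\text{machine } i \text{ is activated}] \;\leq\; 5\ln(mn)\, \sum_{i\in M} c_i x_i^{\star}.
\]
Now I invoke Lemma~\ref{lma:fractional-main}, which bounds $\sum_{i\in M} c_i x_i^{\star} = O(m\log m)$ for the fractional schedule at the end of the run. Combining these yields the desired bound $O(m \log m \log(mn))$.

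The main subtle point, which I would flag explicitly, is the monotonicity of $x_i$ over time, because otherwise one could not simply substitute the final value $x_i^{\star}$ in place of the running value $x_i(j)$ at the time of potential activation. In our algorithm the updates in both Type A and Type B steps only increase $x_i$ (and the pre-processing phase only sets initial nonzero values), so this monotonicity holds and the argument goes through. Everything else is just linearity of expectation and the already-proved fractional bound, so no further work is needed.
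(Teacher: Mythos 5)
Your proof is correct and follows essentially the same route as the paper: machine $i$ is activated with probability $\min(5x_i\ln(mn),1)$ (using monotonicity of $x_i$, which the paper leaves implicit), and then linearity of expectation together with the bound $\sum_{i\in M} c_i x_i = O(m\log m)$ from Lemma~\ref{lma:fractional-main} gives the claim. Nothing is missing.
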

\begin{proof}
The expected startup cost of machine $i$ in the integer schedule 
is at most $5c_i x_i\ln (mn)$;
the lemma now follows from Lemma~\ref{lma:fractional-main} and
linearity of expectation.
\end{proof}
\noindent
To bound the makespan of the integer solution, we first bound the probabilities $q_{ij}$.
(This lemma also shows that with high probability, at least one machine is active for even
one job, thereby proving correctness of the algorithm.)
\begin{lemma}
\label{lma:open}
With probability at least $1 - 1/m$, $q_{ij} \leq z_{ij}$ for all machines $i$ and jobs $j$.
\end{lemma}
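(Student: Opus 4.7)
The strategy is to rephrase the condition $q_{ij}\leq z_{ij}$ as a uniform lower bound on a random activation sum, and then to prove that lower bound via Chernoff concentration and a union bound.

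By definition $q_{ij}=z_{ij}/\sum_{i'\in M_A(j)}z_{i'j}$ for $i\in M_A(j)$, so $q_{ij}\leq z_{ij}$ holds for every $i\in M_A(j)$ simultaneously iff $\sum_{i'\in M_A(j)}z_{i'j}\geq 1$ (this simultaneously forces $M_A(j)\neq\emptyset$). It therefore suffices to show, for each fixed job $j$, that this random sum is at least $1$ except with probability $O(1/(mn))$, and then to union-bound over the $n$ jobs. To analyze the sum I would partition the machines into $S_1=\{i:x_i(j)\geq 1/(5\ln(mn))\}$ (deterministically activated, contributing $z_{ij}=y_{ij}$) and $S_2=M\setminus S_1$ (independently activated with probability $5x_i(j)\ln(mn)$, contributing $z_{ij}=y_{ij}/(2x_i(j))$, which lies in $[0,1]$ by the RFSLP constraint $y_{ij}\leq 2x_i$ of Eqn.~\ref{eqn:fractionnew}). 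Let $A:=\sum_{i\in S_1}y_{ij}$; if $A\geq 1$ we are done, so assume $A<1$, in which case the covering constraint Eqn.~\ref{eqn:coveringnew} forces $B:=\sum_{i\in S_2}y_{ij}\geq 1-A$.

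The random contribution from $S_2$ is $\tilde W_j=\sum_{i\in S_2}X_i\cdot y_{ij}/(2x_i(j))$, where the $X_i$ are independent Bernoulli activation indicators (independence comes from the independence of the offline samples $r_i$). A direct calculation gives $E[\tilde W_j]=\tfrac 52\,B\ln(mn)\geq \tfrac 52(1-A)\ln(mn)$, so the mean exceeds the threshold $1-A$ by a factor of $\Theta(\ln(mn))$. I would then apply the multiplicative Chernoff bound for sums of independent $[0,1]$-bounded random variables: with $\delta=1-(1-A)/E[\tilde W_j]\geq 1-2/(5\ln(mn))$, we get $\Pr[\tilde W_j<1-A]\leq\exp(-\delta^2 E[\tilde W_j]/2)$. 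With the constant $5$ in the activation probability chosen suitably, this tail bound becomes at most $1/(mn)$ for each job, and a union bound over the $n$ jobs yields the claimed $1-1/m$ probability.

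The hard part will be making the Chernoff exponent large enough uniformly in $A$, especially in the regime where $1-A$ is small and therefore $E[\tilde W_j]$ is also small: the bound $\exp(-\delta^2 E[\tilde W_j]/2)$ has to polynomially beat $1/(mn)$ even in that corner. The key structural fact saving the argument is that the multiplicative gap between $E[\tilde W_j]$ and the target $1-A$ is always at least $\tfrac 52\ln(mn)$, which is exactly what the coefficient $5$ inside the activation probability $5x_i(j)\ln(mn)$ is engineered to produce. A secondary technical point is that the Chernoff bound requires each weighted indicator $X_i\cdot y_{ij}/(2x_i(j))$ to lie in $[0,1]$; this boundedness is precisely the invariant $y_{ij}\leq 2x_i$ preserved by the fractional algorithm.
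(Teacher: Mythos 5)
Your proposal is essentially the paper's proof: the same per-job reduction to showing $\sum_{i\in M_A(j)} z_{ij}\ge 1$ followed by a union bound over jobs, the same partition into deterministically activated machines (those with $x_i(j)\ge 1/(5\ln(mn))$) versus randomly activated ones, and the same Chernoff-plus-covering-constraint argument using $y_{ij}\le 2x_i(j)$ to keep each weighted activation indicator in $[0,1]$. The only differences are cosmetic---the paper bounds the random contribution against the full $\sum_{i\in M_2(j)}y_{ij}$ (by comparing to Bernoulli variables $Z_{ij}$ with mean $5y_{ij}\ln(mn)/2$) rather than against the shortfall $1-A$, and the small-shortfall corner you flag as ``the hard part'' is dispatched just as tersely in the paper's proof (a bare appeal to ``Chernoff bounds'') as in your sketch.
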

\begin{proof}
We show that $\sum_{i\in M_A(j)} z_{ij} \geq 1$ with probability at least $1 - 1/mn$
for any job $j$; the lemma then follows using the union bound over all jobs. 
We classify machines into $M_1(j)$ and $M_2(j)$ depending on whether or not 
$x_i(j) \geq 1/5 \ln (mn)$. Every machine $i\in M_1(j)$ is also in $M_A(j)$. Therefore, 
the contribution of machines in $M_1(j)$
to $\sum_{i\in M_A(j)} z_{ij}$ is exactly $\sum_{i\in M_1(j)} y_{ij}$. On 
the other hand, the contribution of each machine $i\in M_2(j)$ to this sum
is $y_{ij}/2x_i(j)$ with probability $5 x_i(j)\ln (mn)$, and 0 otherwise. Define
a random variable $Z_{ij} = 1$ with probability $5 y_{ij} \ln (mn)/2$, and 0 otherwise. 
Since $y_{ij}/2 x_i(j)\leq 1$,
\begin{equation*}
\pr\left[\sum_{i\in M_2(j) \cap M_A(j)} z_{ij} < \sum_{i\in M_2(j)} y_{ij}\right]
\leq \pr\left[\sum_{i\in M_2(j)} Z_{ij} < \sum_{i\in M_2(j)} y_{ij}\right]
< \frac{1}{mn},
\end{equation*} 
by Chernoff bounds (cf. e.g.~\cite{MotwaniR97}). 
\end{proof}
\noindent
\begin{lemma}
\label{lma:load-whp}
The makespan of the integer schedule is $O(\log m)$ with probability $1 - 2/m$.
\end{lemma}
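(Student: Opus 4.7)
The plan is to reduce the analysis to the clean setting where each job $j$ is independently assigned to machine $i$ with probability at most $z_{ij}$, and then apply a Chernoff bound. Let $\mathcal{E}$ denote the event, of probability at least $1 - 1/m$ by Lemma \ref{lma:open}, that $q_{ij} \leq z_{ij}$ for every $i \in M$ and $j \in J$. Since the activation thresholds $r_i$ are drawn offline and fully determine both the active sets $M_A(j)$ at every time $j$ and whether $\mathcal{E}$ holds, conditional on $r$ the assignment indicators $X_{ij} = \mathbf{1}[j \text{ is placed on } i]$ are mutually independent across jobs $j$ for each fixed machine $i$, with $\ex[X_{ij} \mid r] = q_{ij}\,\mathbf{1}[i \in M_A(j)]$, which is at most $z_{ij}$ on $\mathcal{E}$.

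The heart of the proof is to show that $\sum_{j \in J} z_{ij} p_{ij} = O(\log m)$ for every machine $i$. I would split this sum according to the definition of $z_{ij}$. When $z_{ij} = y_{ij}$ (machine $i$ is fully active at time $j$, or partially active with $x_i(j) \geq 1/5 \ln(mn)$), the total contribution is at most $\sum_j y_{ij} p_{ij} = O(\log m)$ by Lemma \ref{lma:fractional-main}. When $z_{ij} = y_{ij}/(2 x_i(j))$ with $i$ partially active and $x_i(j) < 1/5 \ln(mn)$, I would group such jobs into dyadic intervals $I_k$ in which $x_i(j) \in (2^{-k-1}, 2^{-k}]$: on $I_k$ the factor $1/(2 x_i(j))$ is at most $2^k$, while the packing constraint in Eqn.~\ref{eqn:packingnew} evaluated at the end of $I_k$ forces $\sum_{j' \leq \text{end of } I_k} y_{ij'} p_{ij'} \leq 6 x_i \leq 6 \cdot 2^{-k}$, so the aggregate contribution from jobs in $I_k$ is at most $2^k \cdot 6 \cdot 2^{-k} = 6$. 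Because $1/m \leq x_i(j) < 1/5 \ln(mn)$ for the jobs in question (the lower bound coming from pre-processing), the number of relevant dyadic intervals is $O(\log m)$, and so this second case also contributes $O(\log m)$.

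Finally I would apply a Chernoff bound to the weighted sum $L_i = \sum_j p_{ij} X_{ij}$, conditional on $r$ and $\mathcal{E}$. The $X_{ij}$ are independent Bernoullis, each weight $p_{ij}$ may be assumed to be at most $1$ (for $p_{ij} > 1$ the virtual cost $c_i p_{ij}$ pushes machine $i$ so far down the order $M(j)$ that the fractional algorithm effectively assigns no $y_{ij}$ there, and such pairs can be truncated), and the conditional mean satisfies $\ex[L_i \mid r] \leq \sum_j z_{ij} p_{ij} = O(\log m)$ by the previous paragraph. Standard Chernoff bounds then give $\pr[L_i > C \log m \mid r] \leq 1/m^2$ for a suitable constant $C$, for every $r$ on which $\mathcal{E}$ holds; a union bound over the $m$ machines combined with the $1/m$ failure probability from Lemma \ref{lma:open} yields the claimed $2/m$. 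The main obstacle is the dyadic argument in the second paragraph: one has to track the simultaneous growth of $x_i$ and the packing-constraint-controlled accumulation of $y_{ij}p_{ij}$ in lockstep, and the choice of the activation threshold $1/5 \ln(mn)$ in the definition of $z_{ij}$ is precisely what caps the number of dyadic levels at $O(\log m)$, balancing the per-level $O(1)$ load contribution against the startup cost already paid in Lemma \ref{lma:cost-integer}.
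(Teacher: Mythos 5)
Your proof is correct and follows the paper's overall structure (condition on Lemma~\ref{lma:open}, bound $\sum_j z_{ij}p_{ij}$ per machine, apply Chernoff with $p_{ij}\le 1$, union bound over machines), but the key sub-argument --- bounding $\sum_{j:\,x_i(j)<1/5\ln(mn)} y_{ij}p_{ij}/(2x_i(j))$ --- is done by a genuinely different route. The paper uses the per-step effective-capacity bound $y_{ij}\le 6(x_i(j)-x_i(j'))/p_{ij}$, so each term is at most $3(x_i(j)-x_i(j'))/x_i(j)$, and the telescoping sum is compared to $3\int_{1/m}^{1} dw/w \le 3\ln m$. You instead bucket the jobs into dyadic ranges of $x_i$ and invoke the \emph{aggregated} packing constraint of the RFSLP (Eqn.~\ref{eqn:packingnew}, valid for partially active machines throughout the run precisely because increments of $y_{ij}$ are capped by the effective capacity), getting $O(1)$ load-weighted mass per dyadic level and $O(\log m)$ levels since $x_i\ge 1/m$ after pre-processing. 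Both arguments exploit the same underlying fact --- fractional load on a partially active machine accrues in proportion to the growth of $x_i$, which lives in $[1/m,1]$ --- so they yield the same bound; the paper's version needs only the per-step property, while yours needs the invariant that Eqn.~\ref{eqn:packingnew} is maintained, which the paper's definition of effective capacity guarantees. Your handling of independence (conditioning on the offline thresholds $r$, under which the per-job assignment indicators are independent and the event of Lemma~\ref{lma:open} is measurable) is if anything stated more carefully than in the paper. Two small caveats that do not affect correctness: the assumption $p_{ij}\le 1$ is best justified by simply discarding pairs with $p_{ij}>\mathbf{L}$ (which \opt\ never uses), not by appeal to the virtual-cost ordering, and your closing remark that the threshold $1/5\ln(mn)$ is what caps the number of dyadic levels is off --- the cap comes from the pre-processing lower bound $x_i\ge 1/m$; the threshold's role is in the union bound of Lemma~\ref{lma:open} and in keeping the activation cost of Lemma~\ref{lma:cost-integer} bounded.
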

\begin{proof}
First, we prove that the load on any machine in the integer schedule is 
$O(\log m + \sum_{j\in J} y_{ij} p_{ij})$ with probability at least $1 - 1/m^2$
conditioned on the following:
\begin{itemize} 
\item Lemma~\ref{lma:open} holds, i.e. $q_{ij}\leq z_{ij}$ for all machines $i$ and
jobs $j$, and
\item Machine $i$ is active from the outset in the integer algorithm.
\end{itemize}
Given that machine $i$ is active from the outset and $q_{ij} \leq z_{ij}$, 
the load on machine $i$ due to job $j$ is $p_{ij}$ with probability at most $z_{ij}$. 
Now,
\begin{equation*}
	\sum_{j\in J} z_{ij} p_{ij} 
	= \sum_{j\in J: x_i(j) < 1/5 \ln (mn)} \frac{y_{ij} p_{ij}}{2x_i(j)} + \sum_{j\in J: x_i(j) \geq 1/5 \ln (mn)} y_{ij} p_{ij}
	\leq \sum_{j\in J: x_i(j) < 1} \frac{y_{ij} p_{ij}}{2x_i(j)} + \sum_{j\in J} y_{ij} p_{ij}.
\end{equation*}
Let job $j'$ immediately precede job $j$ in the online order; if $j$ is the first job,
$x_i(j') = 1/m$. Then,
\begin{equation*}
	\sum_{j\in J: x_i(j) < 1} \frac{y_{ij} p_{ij}}{2x_{ij}} 
	\leq 3 \sum_{j\in J: x_i(j) < 1} \frac{x_i(j) - x_i(j')}{x_i(j)}
	\leq 3 \sum_{j\in J: x_i(j) < 1} \int_{w = x_i(j')}^{x_i(j)} \frac{dw}{w}
	\leq 3 \int_{1/m}^1 \frac{dw}{w}
	\leq 3 \ln m.
\end{equation*}
The first inequality follows from the fractional algorithm which assigns a 
fraction $y_{ij} \leq 6(x_i(j) - x_i(j'))/p_{ij}$ of job $j$ to machine $i$.
Since $y_{ij} p_{ij} \leq p_{ij} \leq 1$ for all jobs $j$, it follows using 
Chernoff bounds that the load on machine $i$ is 
$O(\log m + \sum_{j\in J} y_{ij} p_{ij})$ with probability at least $1 - 1/m^2$.

Using the union bound over all machines and Lemma~\ref{lma:open},
we can now claim that the load on machine $i$ is 
$O(\log m + \sum_{j\in J} y_{ij} p_{ij})$ (unconditionally) for all machines $i$,
with probability at least $1-2/m$. 
The lemma now follows using Lemma~\ref{lma:fractional-main}.
\end{proof}
\noindent
Finally, we note that Lemmas~\ref{lma:cost-integer} and \ref{lma:load-whp} imply 
Theorem~\ref{thm:main}.

\newpage

\bibliographystyle{plain}
\bibliography{ref}

\end{document}